\documentclass[12pt]{article}
\usepackage{algorithmic}
\usepackage{algorithm}
\usepackage{amsmath}
\usepackage{amssymb}
\usepackage{amsthm}
\usepackage{caption}
\usepackage{graphicx}
\usepackage{url} % not crucial - just used below for the URL

\usepackage{tikz}
\usetikzlibrary{shapes.geometric, arrows}
\tikzstyle{startstop} = [rectangle, rounded corners, minimum width=3cm, minimum height=1cm,text centered, draw=black, fill=red!30]
\tikzstyle{process} = [rectangle, minimum width=3cm, minimum height=1cm, text centered, draw=black, fill=orange!30]
\tikzstyle{decision} = [diamond, minimum width=3cm, minimum height=1cm, text centered, draw=black, fill=green!30]
\tikzstyle{arrow} = [thick,->,>=stealth]

\newtheorem{lemma}{Lemma}[section]
\newtheorem{proposition}{Proposition}[section] 
\numberwithin{equation}{section}

\newcommand{\by}{{\boldsymbol{y}}}
\newcommand{\bk}{{\boldsymbol{k}}}

\setlength{\parindent}{0pt}

%\pdfminorversion=4
% NOTE: To produce blinded version, replace "0" with "1" below.
\newcommand{\blind}{0}

% DON'T change margins - should be 1 inch all around.
\addtolength{\oddsidemargin}{-.5in}%
\addtolength{\evensidemargin}{-1in}%
\addtolength{\textwidth}{1in}%
\addtolength{\textheight}{1.7in}%
\addtolength{\topmargin}{-1in}%

\begin{document}

\def\spacingset#1{\renewcommand{\baselinestretch}%
{#1}\small\normalsize} \spacingset{1}

%%%%%%%%%%%%%%%%%%%%%%%%%%%%%%%%%%%%%%%%%%%%%%%%%%%%%%%%%%%%%%%%%%%%%%%%%%%%%%

\if0\blind
{
  \title{\bf Multi-Frame Blind Manifold Deconvolution for Rotating Synthetic Aperture Imaging }
  \author{Daolin Li  and Jian Zhang\hspace{.2cm}\\
    School of Mathematics, Statistics and Actuarial Science, University of Kent\\
        Martin Benning \\
        Department of Computer Science, University College London\\}
  \maketitle
} \fi

\if1\blind
{
  \bigskip
  \bigskip
  \bigskip
  \begin{center}
    {\LARGE\bf Multi-Frame Blind Manifold Deconvolution for Rotating Synthetic Aperture Imaging}
\end{center}
  \medskip
} \fi

\bigskip
\begin{abstract}
  Rotating synthetic aperture (RSA) imaging system captures images of the target scene at different rotation angles by rotating a rectangular aperture. 
Deblurring acquired RSA images plays a critical role in reconstructing a latent sharp image underlying the scene. In the past decade, the emergence of blind convolution technology has revolutionised this field by its ability to model complex features from acquired images. Most of the existing methods attempt to solve the above ill-posed inverse problem through maximising a posterior.
 Despite this progress, researchers have paid limited attention to exploring low-dimensional manifold structures of the latent image within a high-dimensional ambient-space. Here,  we propose a novel method to process RSA images using manifold fitting and penalisation in the content of multi-frame blind convolution. We develop fast algorithms for implementing the proposed procedure.  Simulation studies demonstrate that manifold-based deconvolution can outperform conventional deconvolution algorithms in the sense that it can generate a sharper estimate of the latent image in terms of estimating pixel intensities and preserving structural details. 
\end{abstract}

\noindent%
{\it Keywords: Rotating synthetic aperture imaging, multi-frame blind deconvolution, Manifold fitting, Latent image reconstruction}
\vfill

\newpage
\spacingset{1.75} % DON'T change the spacing!
\section{Introduction}
\label{sec:intro}

  Compact and lightweight satellites with their limited size restrict the use of large telescopes needed for high-resolution Earth imagery. 
Rotating synthetic aperture (RSA) imaging system, introduced by  Rafanelli and Rehfield (1993), captures images by rotating a rectangular aperture in time-varying angles. RSA could potentially enhance the resolution of all imagery, thereby maximising the performance of these constrained platforms.
 Due to the aperture's shape, images typically exhibit a  lower resolution along the short axis compared to the long axis with spatial asymmetry in the underlying point-spread function (PSF, or called blur kernel in the field of deconvolution), which changes over time as the mirror rotates  (Sun et al., 2023).  During rotation, the directional variability enables images taken at a different angle of the scene to capture unique features of the scene along the long axis (Zhi et al., 2021).
 Hence, it is crucial to employ an appropriate image-fusion method to synthesise information in the image sequence owing to the rotation of the mirror into a single high-resolution image.
 The synthesised images can match or even surpass the resolution of conventional circular apertures, offering a cost-effective and low-complexity alternative (Monreal et al., 2018). In contrast, circular apertures capture images with a uniform PSF, leading to consistent degradation across all angles. This uniformity restricts the capture of angle-dependent features in the image sequence. To achieve a high resolution with circular apertures,  larger lenses are required, incurring substantial manufacturing costs (Stepp et al., 2003). Recent studies in optical remote sensing have shown RSA's potential to reduce the payload size and weight constraints, which are often associated with traditional large-aperture systems (Sun et al., 2024; Lv at al., 2021). 
  To take a full advantage of RSA, advanced image fusion methods are required for reconstructing a high-resolution image from a plurality of image frames acquired by the strip aperture imaging sensor. Conventional fusion techniques, such as principal component analysis (Kwarteng and Chavez, 1989; Jelenek et al., 2016), wavelet transform (Li et al., 1995; Amolins et al., 2007), and wavelet decomposition (Cheng et al., 2015), often fall short due to limited adaptation to RSA's unique degradation characteristics. Current fusion software primarily relies on inputing multiple PSFs across various angles (Zackay et al., 2016), the resulting images may suffer from degradation caused by the rectangular aperture. Moreover, on-orbit satellite vibrations can further blur the acquired images and deteriorate RSA's performance.
 In literature, both non-blind convolution (Gregson et al., 2013; Dong et al., 2011) and blind deconvolution (Kotera et al., 2013; Sroubek and Milanfar, 2011) have been proposed for RSA image reconstruction.  Non-blind deconvolution techniques use a known PSF to deblur the acquired image set whereas
 blind deconvolution attempts to recover a sharp image from a set of blurred images captured by RSA when PSFs are unknown. As an ill-posed inverse problem, most blind deconvolution algorithms depend heavily on regularisation. A commonly used framework of regularisation is based on the maximum-a-posteriori (MAP) with pre-selected priors (Satish et al., 2020). 
The MAP allows for flexible integration of appropriate priors for the image and blur kernel at various stages, helping mitigate the ill-posed nature of the deconvolution problem and guiding the algorithm to reconstruct images  accurately (Levin et al., 2019).
  The MAP methods can be classified into two main types (Levin et al., 2019): (a) $\mbox{MAP}_{(x,k)}$ approach, estimating both the latent image and PSFs simultaneously; (b) marginal posterior approach, estimating the PSFs by its marginal posterior followed by using non-blind deconvolution algorithms to recover the latent image (Babacan et al., 2008; Wipf and Zhang, 2014). In both cases, the selection of priors plays an important role.   A wide range of priors have been considered by researchers, including hyper-Laplacian prior 
(Krishnan and Fergus, 2009), which leverages the sparsity of gradients in natural images (Miskin and Mackay, 2000); $L_0$-norm prior, which is effective for deblurring text images (Pan et al. 2016); total variation (\textbf{TV}) prior (Chan and Wong, 1998); $\frac{L_1}{L_2}$-norm prior (Krishnan et al., 2011); low-rank prior (Gu et al., 2014); and dark channel prior (He et al., 2010). Rameshan et al. (2012) pointed out that most blur kernels are not uniformly distributed and often exhibit distinct structures, such as radial or lateral symmetry. Commonly used priors for estimating blur kernels include $L_1$-norm prior (Dong et al., 2012), $L_2$-norm prior for a stabilised estimate (Ren et al., 2016),  $L_p$-norm in a  gradient-domain (Rameshan et al., 2012), and a specialised prior that encodes information on how blurred images relate to the sharp image (Liu et al., 2014). The RSA imaging system takes multiple images of the same target from various angels, which can reduce the ill-posed nature of the blind deconvolution problem. Studies on multi-frame blind deconvolution can be found in (Sroubek and Milanfar, 2011; Dong et al., 2012; Delbracio and Sapiro, 2015).

  Most existing deconvolution algorithms can be performed iteratively, whereby each iteration improves estimating PSFs and the latent image iteratively. 
 Iterative methods include MAP and expectation-maximization algorithms. Despite the improvement in image quality resulting from these advanced algorithms, blurs, generated by vibrations in the RSA, can still degrade the quality of RSA imaging due to unknown structures of the target scene. Here, to address the issue, we propose a manifold fitting method with the help of sparsity prior to reduce noise in captured images. Manifold fitting has been used in dimensionality reduction (Weinberger and Saul, 2006). It is based on the manifold hypothesis (Fefferman et al.,2016) that high-dimensional data often reside near a low-dimensional manifold within the surrounding ambient space. The blurred images obtained by an RSA system are embedded in a high-dimensional ambient space, often exhibiting a low-dimensional manifold structure (Osher et al., 2017; Gong et al., 2010). This motivates us to estimate a low-dimensional manifold that contains the latent image, into which we  project the acquired multi-frame images.  By leveraging manifold structures, we can significantly enhance the quality of acquired images.  Our procedure is implemented in three steps: (1) deconvoluting captured images and  promoting sparsity of the super-resolution images via an $L_1$-norm penalty; (2) using manifold fitting to improve the resolution of these images; (3) synthesising these enhanced  images to produce a high-resolution image for the scene.

  The main contributions of our proposed method are three-fold:
  \begin{itemize}
  	\item An muti-frame blind manifold convolution model is proposed.  An enhanced algorithm is developed to reduces the bias of manifold deconvolution on each captured image and therefore to improve the performance of subsequent manifold fitting.
           \item Fast algorithms are developed for solving the associated optimisation problems. 
\item  Simulation studies are conducted to demonstrate that combining blind deconvolution with manifold fitting can outperform both direct manifold modelling and direct blind convolution of the original captured images. Compared to the existing methods, the proposed procedure can produce a sharper, high-quality image from the acquired RSA images in terms of accurate pixel intensities and structural details.
  \end{itemize}

  The remainder of this paper is organised as follows: Section \ref{sec:prob} presents a detailed description and implementation of the proposed procedure.
  Section \ref{sec:simul} evaluates the effectiveness of the proposed method on a simulated dataset. Section \ref{sec:conc} concludes the paper with potential future improvements.
 Throughout the paper, the symbols $Y_i,K_i,X,N_i$ represent random variables, while the lowercase letters $y_i,k_i,x,n_i$ denote their associated specific values. By $\textbf{vec}(\cdot)$, we vectorise a multidimensional array by stacking its columns. Let $\ast$ denote a two-dimensional discrete convolution operator  applied to each channel individually as defined in Appendix \ref{Appendix A}.

\section{Methodology}
\label{sec:prob}
  The imaging process in the RSA system can be modelled as a multi-frame blind convolution of blur kernels (i.e., PSFs) with a high-resolution image  at different angles, plus some additive noise, namely,
  \begin{equation}
  	Y_i = K_i \ast X + N_i, \ i = 1,\cdots,n,
  	\label{eq2.1}
  \end{equation}
  where $K_i$,  $Y_i, X, N_i \in \mathbb{R}^{C \times a \times b}$ stand for unknown blur kernel, acquired blurred image, latent high-resolution image, and additive noise at angel $i$, respectively. The noise is usually assumed to be Gaussian. Here, $a$, $b$, $C$ and $n$ denote the height and width of the images,  the number of channels, and the number of rotating angles, respectively. For simplicity, we assume that blur kernels $\{K_i\}_{i=1}^n$ have the same, known dimensionality $s<< \min\{a, b\}$, and  that $K_i \in \mathbb{R}^{s \times s}$ lies on the probability simplex with a finite number of pixels as its support. $K_i$ represents a non-negative distribution of light and sums to one, maintaining the image's overall brightness. We assume that $X$ and $K_i$ are independent. 
Convoluting equation (\ref{eq2.1}) with $G_h$ and $G_v$, we have a  model for gradient images,
\begin{eqnarray}\label{eqmD}
\nabla_j Y_i=K_i\ast \nabla_j X+\nabla_j N_i, \ i=1,...,n; j\in \{h, v\},
\end{eqnarray}
 where $G_h = [1, -1]$ and $G_v = [1, -1]^{\mathbf{T}}$, and  for image $x$, $ \textbf{vec}(\nabla_h x)=G_h\ast x$ and $\textbf{vec}(\nabla_v x)=G_v\ast x$ denote the horizontal and vertical differences with respect to channel and pixel respectively. Determining the boundary of an object within a scene is referred to as edge detection.
For a natural image, edges are detected by the local variation in intensity which is usually determined by the intensity gradients based on differentiation operators. 

As pointed out before, multi-frame  blind deconvolution permits recovery of the latent image from a set of blurred images in the presence of unknown kernels, that is, estimating  
 $X$ using captured images $(Y_i)_{\le i\le n}$ when $(K_i)_{i=1}^n$ are not available. This is a severly ill-posed inverse problem as there are an infinite number of $(x, k)$ satisfying equations (\ref{eq2.1}) and (\ref{eqmD}). To tackle the ill nature of the problem,  strong priors are required for both the sharp image and blur kernels in order to regularise the solution space.
To facilitate the description of  a sparse prior on blur kernels, we introduce  $Q_i\in \mathbb{R}^{s \times s}$,  generalised blur kernels with specific values $q_i$'s. We reparametrise $K_i$ via projecting $Q_i$ onto the probability simplex, namely
 \begin{equation}
  	K_i = \mathop{\arg\min}\limits\{ ||K - Q_i||_{\mathrm{F}}^2: K \in \mathbb{R}^{s \times s},\  \ K \ \text{is on the probability simplex}\}.
  	\label{eq2.3}
  \end{equation}

\subsection{MAP blind deconvolution: XK-procedure}
An important observation on blind deconvolution made by Levin et al. (2009)  is the
failure of the conventional MAP approach. These authors emphasised the strength of estimating the kernel alone compared to estimating both the image and the kernel simultaneously. Recent studies have shown the advantage of using gradient images and marginalisation over the original MAP.  
Among them, while some prior specifications may differ,  the basic idea of the improved MAP is to estimate $x$ and $k_i$ by alternatively running the optimisation in the following two steps (Zhou et al., 2021), based on multi-frame  images $\by=(y_1,\cdots,y_n)$ captured by the RSA system.

{\bf $X$-Step}: Fixing the value of $\bk=(k_1,...,k_n)$,  maximise the penalised least squares function
\begin{eqnarray}\label{eqx}
l(x|\by,\bk)&=& \left\{\lambda_1\sum_{i=1}^n||y_i-k_i\ast x ||^2 - \log p_X(x)\right\},
\end{eqnarray}
  where $p_{X}(x)$ is a prior density functions of $X$.  

Note that gradients in sharp natural images generally exhibit sparsity (Miskin and MacKay, 2000), that is,  most gradients are near zero with only a few exceptions.  This can be described by a hyper-Laplacian distribution (Krishnan and MacKay, 2009). More general gradient features of image $x$ can be specified by
  \begin{equation}
  	\log p_X(x) \propto -\sum\limits_{j \in \{h,v\}} ||\textbf{vec}(\nabla_j x)||_{\alpha}^{\alpha}.
  	\label{eqxp}
  \end{equation}
  A sparse prior applies when $\alpha \leq 1$; a shrink prior applies when $\alpha=2$; and for natural images, typically $\alpha \in [0.5,0.8]$ (Simoncelli,1999). These terms in equation (\ref{eqxp}) thus take the form of $L_{\alpha}$-penalty terms $||\textbf{vec}(\nabla_h x)||_{\alpha}^{\alpha}$ and $||\textbf{vec}(\nabla_v x)||_{\alpha}^{\alpha}$  over the desired image. Combining equations (\ref{eqx}) and (\ref{eqxp}), the optimisation problem in equation (\ref{eqx}) can be reformulated as:
  \begin{equation}
  	\begin{split}
  	  \widehat{x} &= \mathop{\arg\min}\limits_{x} \bigg\{\lambda_1\sum\limits_{i=1}^{n} ||y_i - k_i \ast x||_{\mathrm{F}}^2 + \sum\limits_{j \in \{h,v\}} ||\textbf{vec}(\nabla_j x)||_{\alpha}^{\alpha} \bigg\}.\\
  	\end{split}
  	\label{eqxalp}
  \end{equation}

{\bf $K$-Step}:  Gradient images suppress low-frequency components and emphasise high-frequency details, such as edges, which are commonly present in natural images. Convolution has predominant effects on high-frequency components of $x$ with ignorable effects on low-frequency components of $x$. Therefore,  kernel estimation based on original images $(\by,x)$ may introduce some bias by using irrelevant low-frequency information. This suggests that estimating kernels in the gradient domain will be better than in the pixel domain (Cho and Lee, 2009). Accordingly, 
 fixing $x$, estimate $k_i$'s via  minimising the least square function for the gradient model (\ref{eqmD}),
  \begin{equation}
  	\begin{split}
  	  \widehat{q_i} &= \mathop{\arg\min}\limits_{q_i} \bigg(\sum\limits_{j \in \{h,v\}} ||\nabla_j y - q_i \ast \nabla_j \widehat{x}||_{\mathrm{F}}^2 + \mu ||\textbf{vec}(q_i)||_1 \bigg) \\
  	  \widehat{k_i} &=\min_{k_i}||q_i-k_i||_F^2\
          \mbox{  subject to that } k_i \mbox{ lies in the probability simplex.}
  	\end{split}
  	\label{eqQ}
  \end{equation}

  In the $X$-step, we estimate the latent image by imposing a penalty in the gradient image space.  However, the quality of the resulting image still depends on the noise level of the acquired images. Reducing noise in the acquired images $\{y_i\}_{i=1}^n$ can sharpen the reconstructed latent image $\widehat{x}$. 
To provide an insight into the noise reduction problem, we consider the optimisation problem
in equation (\ref{eqxalp}), without the penalty term $\sum\limits_{j \in \{h,v\}} ||\textbf{vec}(\nabla_j x)||_p^p$ temporarily.  Proposition \ref{proposition2.1} in the Appendix A shows that, in the frequency domain, the reconstructed sharp image can be expressed as a weighted average of the deconvolved images $\{\widetilde{x_i}\}_{i=1}^n$, where the deconvoluted image $\widetilde{x_i}$ is the deconvolution of the blurred image $y_i$ with its estimated blur kernel $\widehat{k_i}$ through the equation $\mathcal{F}(\widetilde{x_i}) \odot \mathcal{F}(\widetilde{k_i}) = \mathcal{F}(y_i)$. Therefore, reducing the noise in $\widetilde{x_i}$ can improve the quality of estimate $\widehat{x}$ of the latent image. 
This motivates us to enhance $\{\widetilde{x_i}\}_{i=1}^n$ for a better reconstruction of $\widehat{x}$.
  
Note that in the RSA image system, as the images are taken at a sequence of regular angels, the deconvolved images $\{\widetilde{x}_i\}_{i=1}^n$ may share some common features related to the latent image. This suggests an enhancement method that we replace each $\widetilde{x}_i$ with a weighted average (denoted by $\widehat{x}_i^{\ast}$) of nearby deconvolved images. This results in a refined estimate of the sharp image:
  \begin{equation}
  	\widehat{x} = \mathcal{F}^{-1} \bigg(\frac{\sum\limits_{i=1}^n \overline{\mathcal{F}(\widetilde{k}_i)} \odot \mathcal{F}(\widetilde{y}_i)}{\sum\limits_{i=1}^n \overline{\mathcal{F}(\widetilde{k}_i)} \odot \mathcal{F}(\widetilde{k}_i)} \bigg) = \mathcal{F}^{-1} \bigg(\sum\limits_{i=1}^n \frac{\overline{\mathcal{F}(\widetilde{k}_i)} \odot \mathcal{F}(\widetilde{k}_i)}{\sum\limits_{j=1}^n \overline{\mathcal{F}(\widetilde{k_j})} \odot \mathcal{F}(\widetilde{k_j})} \odot \mathcal{F}(\widehat{x_i^{\ast }}) \bigg), 
  	\label{eq2.12}
  \end{equation}
  where $\widetilde{y}_i$ is a denoised convolved image defined by the equation $\mathcal{F}(\widehat{x}_i^{\ast}) \odot \mathcal{F}(\widetilde{k}_i) = \mathcal{F}(\widetilde{y}_i)$
and $\widetilde{k}_i$ is the padded version of $\widehat{k}_i$ as detailed in the Appendix A. In the next subsection, we propose a new algorithm, where, for each $i$, we use $\widetilde{y_i}$ rather than taking a direct weighted average of the acquired images in a neighborhood of $y_i$. This is because a direct averaging may introduce bias in the denosing process due to kernel hetergeneity when the RSA  takes images at angels with varying PSFs. The above algorithm can further be improved by imposing a penalty term on gradient images when computing $\widetilde{x}_i$. 

\subsection{MAP blind manifold deconvolution: IMR-procedure}
\label{sec:meth}
Assume that the acquired images are noisy blind convolutions of PSFs with a sharp image which is embedded within a low-dimensional manifold.
 Following the idea described in the previous subsection, we develop an enhanced multi-frame blind deconvolution method by using manifold fitting, implemented in three main steps.
\begin{itemize}
 \item {\it IX-step}: Calculate deconvolved images $\{\widetilde{x}_i\}_{i=1}^n$ from the acquired RSA images $\{y_i\}_{i=1}^{n}$. That is, we map the acquired RSA images back to the latent source image space.
 \item {\it MF-step}: Calculate deconvolved manifold images using the results in the {\it IX}-step.. 
\item {\it RC-step}: Reconstruct latent high-resolution image. The convolution is reintroduced into  $\{\widetilde{x}_i\}_{i=1}^n$ and denoised convolved images $\{\widetilde{y}_i\}_{i=1}^n$ are generated, followed by a non-blind deconvolution. 
\end{itemize}

  \subsubsection{IX-step}
  To obtain the deconvolved images $\{\widetilde{x}_i\}_{i=1}^n$ from $\{y_i\}_{i=1}^{n}$, we apply the XK-procedure to calculate $\widehat{k}_i$,  its padded version
$\widetilde{k}_i$ and $\widehat{x}$.  
%generate $\{\widetilde{x}_i\}_{i=1}^n$ by solving equation $\mathcal{F}(\widetilde{x}_i) \odot \mathcal{F}(\widetilde{k}_i) = \mathcal{F}(y_i)$. 
The alternative iterations will stop after running a pre-specified number of times, and the resulting estimates of blur kernels are recorded for calculating individual $\widehat{x}_i$ later. 

 Note that the objective function in equation (\ref{eqxalp}) is not convex.  Two auxiliary variables $\gamma_h \in \mathbb{R}^{C \times a \times b}$ and $\gamma_v \in \mathbb{R}^{C \times a \times b}$ are introduced under the framework of half-quadratic splitting (Zhou et al., 2021) (see Appendix \ref{Appendix B}). This method involves an Iteratively Reweighted Least Squares (IRLS) algorithm and a two-dimensional discrete Fourier transform, as detailed in Appendix \ref{Appendix A}. Given $\widehat{x}$, the optimisation problem  in equation (\ref{eqQ}) can be reformulated as a standard multivariate lasso problem in the Lagrangian form (see Appendix \ref{Appendix C}). The optimisation problem in equation (\ref{eqQ}) can then be efficiently solved using Celer, a state-of-the-art solver for the lasso regression  (Massias et al., 2018 and 2020). Projecting the estimated generalised kernels $\widehat{q}_i$ onto the probability simplex, we obtain the estimates $\widehat{k}_i$. Here, an $\mathcal{O}(s^2 \log s)$ algorithm of Wang and Carreira-Perpinan (2013) is used in the projection. 
 
 In the IX-step, we aim to restore the underlying image for each acquired image and to explore its low dimensional manifold structure. By the nature of the RSA imaging, each restored image only captures a part of the latent image.  For each $i$, we calculate the deconvolved image $\widetilde{x}_i$, based on the acquired image $y_i$ and the estimated blur kernel $\widetilde{k}_i$,  by penalisation, that is, solving the optimisation problem:  
  \begin{equation}
  	\widetilde{x}_i = \mathop{\arg\min}\limits_{x} \bigg(  \lambda_1||y_i - \widehat{k}_i \ast x||_{\mathrm{F}}^2  + \sum\limits_{j \in \{h,v\}} ||\textbf{vec}(\nabla_j x)||_{\alpha}^{\alpha} \bigg).
  	\label{eq3.1} 
  \end{equation}
This results in a better estimate than one derived from the equation  $\mathcal{F}(\widetilde{x_i}) \odot \mathcal{F}(\widetilde{k_i}) = \mathcal{F}(y_i)$.
  Like solving the equation (\ref{eqxalp}), we introduce auxiliary variables for the equation (\ref{eq3.1}) and perform the alternative minimisation as Zhou et al.(2021). The whole algorithm for computing $\{\widetilde{x}_i\}_{i=1}^n$ is outlined in Algorithm \ref{Algorithm3.1}. Compared to $(y_i)_{i=1}^n$, $(\widetilde{x}_i)_{i=1}^n$ are closer to the latent image. This is because $(y_i)_{i=1}^n$ are produced through the convolution of the latent image with heterogeneous kernels plus some additive noises. The influence of convolution on the observed image can be mitigated by solving the optimisation problem of equation (\ref{eq3.1}). 

  \begin{algorithm}[h]
  	\caption{XK-procedure for generate $\{\widetilde{x_i}\}_{i=1}^n$ from $\{y_i\}_{i=1}^n$}
  	\label{Algorithm3.1}
  	\renewcommand{\algorithmicrequire}{\textbf{Input:}}
  	\renewcommand{\algorithmicensure}{\textbf{Output:}}
  	
  	\begin{algorithmic}[1]
  		\REQUIRE blurred images $\{y_i\}_{i=1}^n$, kernel support $s$, regularisation weights $\lambda_1$ and $\mu$, number of iterations $T$, $\tau$
  		\ENSURE $\{\widetilde{x_i}\}_{i=1}^n$
  		
  		\STATE \textbf{Initialise} $\widehat{k_i}$ as a Gaussian kernel for $i = 1, 2, \dots, n$
  		
  		\FOR{iter $t = 1:T$}
	  		\STATE $\widehat{x} \leftarrow \widehat{x}^{t}$: compute $\widehat{x}^{t}$ by solving equation (\ref{eqxalp}) (using half-quadratic splitting).
	  		
	  		\FOR{i $= 1:n$}
		  		\STATE Reformulate the problem equation (\ref{eqQ}) in the form of equation (\ref{eqC.13}) and solve it to obtain $\widehat{q_i}$.
		  		
		  		\STATE $\widehat{k_i} \leftarrow \widehat{k_i}^{t}$ : compute $\widehat{k_i}^t$ as the projection of $\widehat{q_i}$ onto the probability simplex.
		  		
	  		\ENDFOR
	  		\IF {$\min\limits_i S_{\textbf{c}}[\textbf{vec}(\widehat{k_i}^{t}),\textbf{vec}(\widehat{k_i}^{t-1})] \geq \tau$ where $S_{\textbf{c}}$ denotes the Cosine similarity}
		  		\STATE \textbf{Break}
	  		\ENDIF
	  		\ENDFOR
  		\FOR{i $= 1:n$}
	  		\STATE Solve the problem equation (\ref{eq3.1}), using half-quadratic splitting
  		\ENDFOR
  	\end{algorithmic}
  \end{algorithm}

In the next step, manifold fitting is employed to denoise $\widetilde{x}_i$, bringing each deconvoved image much closer to $\mathcal{M}$, an estimated  manifold around the latent image.

  \subsubsection{MF-step}

The manifold fitting algorithm is based on the manifold hypothesis, that is, natural images, as high-dimensional data, often lie near a low-dimensional manifold within the ambient space, exhibiting an inherent low-dimensional structure (Fefferman et al., 2016;  Osher et al., 2017). Our objective is to estimate this low-dimensional manifold, which encapsulates the latent image within the high-dimensional space. The estimated manifold can then be used for denoising by projecting samples onto it (Gong et al., 2010). 
  
  Yao et al. (2023) developed an innovative method of manifold fitting that avoids iteration. The core idea is to express a refined sample as a weighted average of samples in a "rectangular" neighbourhood of the unrefined sample. This method requires an initial estimator of the latent manifold $\mathcal{M}$, which should not be too far from $\mathcal{M}$. Compared to the blurred image $y_i$, the deconvolved image $\widetilde{x_i}$ obtained in the previous step is much closer to the latent high-resolution image since the effect of convolution has been mitigated using the $\text{MAP}_{x,k}$ method. Therefore, $\{\widetilde{x_i}\}_{i=1}^n$ can serve as an initial estimator of the latent manifold. The noise in $\{\widetilde{x_i}\}_{i=1}^n$ can be further reduced through the manifold fitting method, which consists of two steps: 1) estimating the contraction direction; 2) local contraction.
  
  \textit{Estimation of contraction direction.} Expressing $\widetilde{x_i}$ as the vector form $z_i \in \mathbb{R}^D$ ($D = Cab$) and defining $z_i^{\ast} := \mathop{\arg\min}\limits_{z' \in \mathcal{M}} ||z' - z_i||_2$, the contraction direction is measured by $z_i^{\ast} - z_i$. A satisfying estimate of $z_i^{\ast} - z_i$ is $F(z_i) - z_i$ where
  \begin{equation}
  	F(z_i) = \sum\limits_{j=1, j \neq i}^n \alpha_j(z_i) z_j. 
  	\label{eq3.2}
  \end{equation}
  The weight $\alpha_j(z_i)$ is defined by
  \begin{equation}
  	\begin{gathered}
  		\widetilde{\alpha}_j(z_i)=
  		\begin{cases}
  			\big(1 - \frac{||z_i - z_j||_2^2}{r_1^2} \big)^k, \ ||z_i - z_j||_2 \leq r_1,\\
  			0, \ \text{otherwise},
  		\end{cases}
  		\alpha_j(z_i) = \frac{\widetilde{\alpha}_j(z_i)}{\sum\limits_{t=1, t\neq i}^n \widetilde{\alpha}_t(z_i)}.
  	\end{gathered}
  	\label{eq3.3}	
  \end{equation}
  Here, $k \geq 2$ is a constant integer, and the value of $r_1$ should be selected to ensure that there are not too many $z_j$'s near $z_i$.
  
  \textit{Local contraction.} Denote the contraction matrix of $z_i$ as $\widehat{U_i}$ and 
  \begin{equation}
  	\widehat{U_i} = \frac{[F(z_i) - z_i][F(z_i) - z_i]^{\textbf{T}}}{||F(z_i) - z_i||_2^2}. 
  	\label{eq3.4}
  \end{equation}
  Based on $\widehat{U_i}$, $z_j - z_i \ (j \neq i)$ can be decomposed into $u_{j,i}$ and $v_{j,i}$ with $z_j - z_i = u_{j,i} + v_{j,i}$, where
  \begin{equation}
  	u_{j,i} = \widehat{U_i} (z_j - z_i) \ \text{and} \ v_{j,i} = z_j - z_i - u_{j,i}. 
  	\label{eq3.5}
  \end{equation}
  Then, the estimate of $z^{\ast}_i$ is given by
  \begin{equation}
  	G(z_i) = \sum\limits_{j=1, j \neq i}^n \beta_j(z_i) z_j. 
  	\label{eq3.6}
  \end{equation}
  The weight $\beta_j(z_i)$ is defined by
  \begin{equation}
  	\begin{gathered}
  		w_u(u_{j,i}) = 
  		\begin{cases}
  			1, \ \text{if} \ ||u_{j,i}||_2 \leq \frac{r_2}{2}; \\
  			\bigg[1 - \big(\frac{2||u_{j,i}||_2 - r_2}{r_2}\big)^2 \bigg]^k, \ \text{if} \ ||u_{j,i}||_2 \in (\frac{r_2}{2},r_2); \\
  			0, \ \text{otherwise},\\
  		\end{cases}\\
  		w_v(v_{j,i})=
  		\begin{cases}
  			\bigg(1 - \frac{||v_{j,i}||_2^2}{r_1^2} \bigg)^k, \ \text{if} \ ||v_{j,i}||_2 \leq r_1; \\
  			0, \ \text{otherwise},\\
  		\end{cases}\\
  		\text{and} \ \beta_j(z_i) = \frac{w_u(u_{j,i})w_v(v_{j,i})}{\sum\limits_{t=1, t \neq i}^n w_u(u_{t,i})w_v(v_{t,i})}, \ \text{where} \ r_2 \gg r_1.
  	\end{gathered}
  	\label{eq3.7}
  \end{equation}
  Using this method, the estimated manifold is composed of $G(z_1), \cdots, G(z_n)$, where $G(z_i)$ actually represents the projection of $z_i$ onto the estimated manifold. In other words, $G(z_i)$ serves as the denoised version of $z_i$.  
  
  \subsubsection{RC-step}
  Our ultimate objective is to recover the latent high-resolution image. Although the samples projected onto the latent manifold, $\{G(z_i)\}_{i=1}^n$, are no longer affected by convolution and noise, they only represent partial features of the latent sharp image. We rewrite $G(z_i)$ in the form as $\widehat{x}^{\ast}_i \in \mathbb{R}^{c \times a \times b}$. The purpose of this step is to combine these projected samples and reconstruct the latent image. First, it is necessary to generate the refined images by reintroducing convolution into $\widehat{x}^{\ast}_i$ as follows:
  \begin{equation}
  	\widetilde{y}_i = \widehat{k}_i \ast \widehat{x}^{\ast}_i. 
  	\label{eq3.8}
  \end{equation} 
  
  Based on $\{\widehat{k}_i\}_{i=1}^n$ and $\{\widetilde{y}_i\}_{i=1}^n$, the sharp image is reconstructed by solving the optimisation problem
  \begin{equation}
  	\widehat{x} = \mathop{\arg\min}\limits_x \bigg( \lambda_1 \sum\limits_{i=1}^{n} ||\widetilde{y}_i - \widehat{k}_i \ast x||_F^2  + \sum\limits_{j \in \{h,v\}} ||\nabla_j \mathbf{x}||_{\alpha}^{\alpha} \bigg).
  	\label{eq3.9}
  \end{equation}
  Similarly, the half-quadratic splitting method is used to solve the problem (\ref{eq3.9}). The integrated algorithm is summarised in Algorithm \ref{Algorithm3.2}.
  \begin{algorithm}[!h]
  	\caption{IMR-procedure: Enhanced Multi-frame Blind Manifold Deconvolution}
  	\label{Algorithm3.2}
  	\renewcommand{\algorithmicrequire}{\textbf{Input:}}
  	\renewcommand{\algorithmicensure}{\textbf{Output:}}
  	
  	\begin{algorithmic}[1]
  		\REQUIRE blurred images $\{y_i\}_{i=1}^n$ and tuning parameters: $s$, $\lambda_1$, $\mu$, $T$, $\tau$, $r_1$ and $r_2$
  		\ENSURE estimated sharp image $\widehat{x}$
  		
  		\STATE \textbf{Initialise} $\widehat{k_i}$ as Gaussian kernel, for $i=1,2,\cdots,n$
  		
  		\STATE \textbf{Use} Algorithm \ref{Algorithm3.1} to obtain estimated blur kernels $\{\widehat{k_i}\}_{i=1}^n$ and deconvolved images $\{\widetilde{x_i}\}_{i=1}^n$ 
  		
  		\FOR{i $= 1:n$}
  		\STATE \textbf{Estimate Contraction Direction}: compute $F(z_i)$ as in equation (\ref{eq3.2}), with $\alpha_j(z_i)$ obtained from equation (\ref{eq3.3})
  		\STATE \textbf{Local Contraction}: utilise the contraction matrix from equation (\ref{eq3.4}) to decompose $z_j - z_i$, and then compute $\beta_j(z_i)$ and $G(z_i)$ using equations (\ref{eq3.7}) and  (\ref{eq3.6}) respectively.
  		\STATE \textbf{Reintroduce Convolution}: generate the enhanced convolved images $\{\widetilde{y}_i\}_{i=1}^n$ using equation (\ref{eq3.8}).  
  		\ENDFOR
  		
  		\STATE \textbf{Non-blind Deconvolution}: Solve the optimisation problem equation (\ref{eq3.9}) to obtain $\widehat{x}$.
  		
  	\end{algorithmic}
  \end{algorithm}

\section{Numerical results}
\label{sec:simul}
  To assess the effectiveness of the proposed method, we simulate blurred images from a real image and reconstruct the latent image using these blurred images.
  
  \subsection{Dataset}
  A colour (RGB) image (size: $512 \times 512$) of an astronaut is selected from \textit{skimage.data}, and its pixel intensities are rescaled from $\{0,1,\cdots,255\}$ to the range of $[0,1]$ (unless otherwise specified, all images are represented with pixel intensities in the range). Next, thirty-six sparse kernels, each of size $25 \times 25$, are generated to convolve with this RGB image, where each kernel represents a PSF at a unique orientation. Subsequently, \textit{i.i.d.} Gaussian noise with a standard deviation of 0.05 is added to these convolved images. The blurring step aligns to equation (\ref{eq2.1}), and an example of the resultant blurring is illustrated in Figure \ref{figure4.1}.
  \begin{figure}[!h]
  	\centering
  	\includegraphics[width=1.0\textwidth]{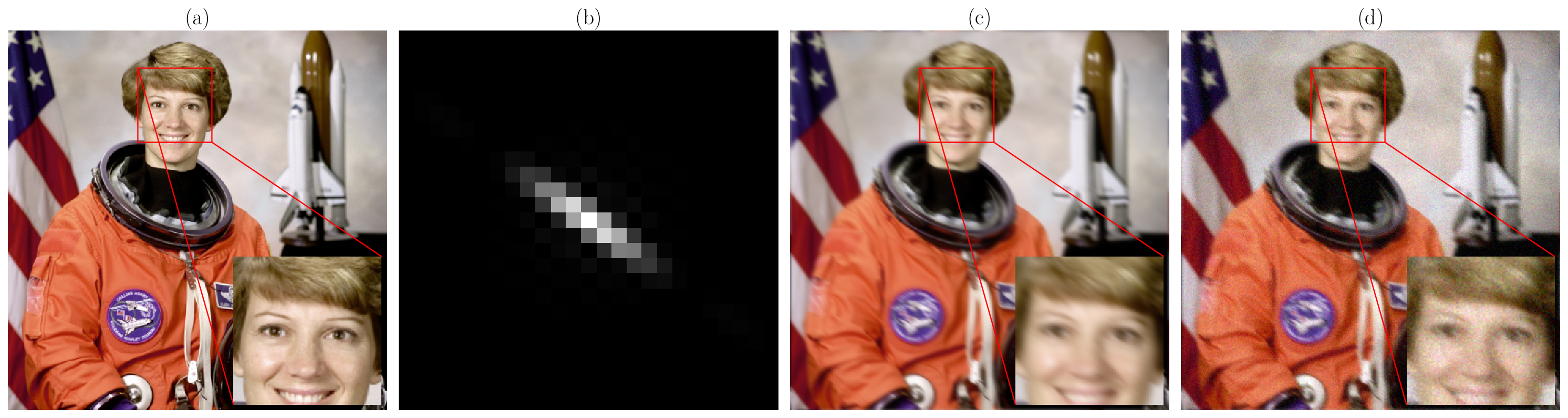}
  	\captionsetup{width=0.9\textwidth}
  	\caption{Process of blurring. (a) The ground-truth sharp image $x$; (b) one simulated PSF $k_i$; (c) one ground-truth convolved image $k_i \ast x$; (d) one blurred image $y_i = k_i \ast x + n_i$.}
  	\label{figure4.1}
  \end{figure}
  
  \subsection{Implementation details}
  When applying Algorithm \ref{Algorithm3.1}, we estimate the blur kernels first. The tuning parameters are set as follows: $\mu=0.04$, $\tau = 0.9980$, $\alpha = 0.80$, and $T=10$. And $\lambda_1=667$ for solving the optimisation problems in equations (\ref{eqxalp}) and (\ref{eqQ}) iteratively while $\lambda_1=2000$ for solving problem equation (\ref{eq3.1}).  The above values of $\lambda_1$ and $\mu$ have been pre-validated using a separate validation set. Figure \ref{figure4.2} shows the estimated kernels along with their corresponding ground-truth kernels. Using the blurred images $\{y_i\}_{i=1}^n$ and the estimated blur kernels $\{\widehat{k}_i\}_{i=1}^n$ (the second row of Figure \ref{figure4.2}), the deconvolved images $\{\widetilde{x}_i\}_{i=1}^n$ are obtained via non-blind deconvolution (solving equation (\ref{eq3.8})). Due to noise in $y_i$, many regions of $\widetilde{x}_i$ exhibit a lack of smoothness (see the first row of Figure \ref{figure4.3}).
  \begin{figure}[!h]
  	\centering
  	\includegraphics[width=1.0\textwidth]{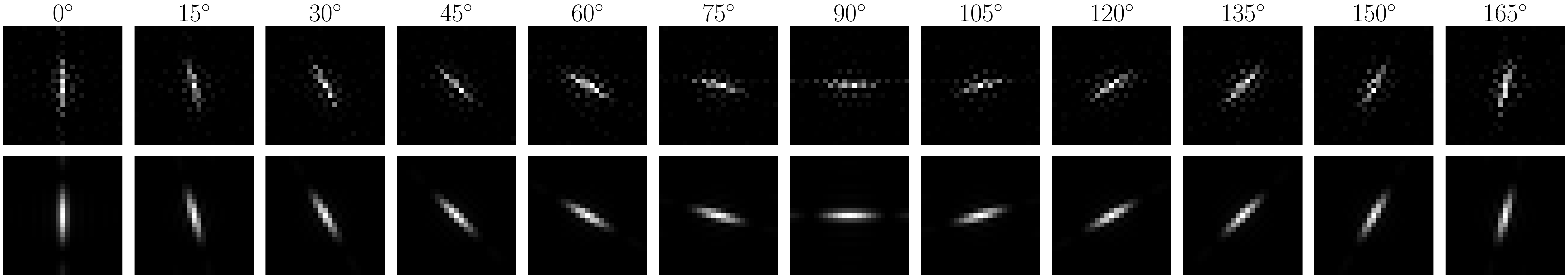}
  	\captionsetup{width=0.9\textwidth}
  	\caption{First row: $12$ estimated blur kernels; second row: the associated real (simulated) blur kernels.}
  	\label{figure4.2}
  \end{figure}
  \begin{figure}[!h]
  	\centering
  	\includegraphics[width=1.0\textwidth]{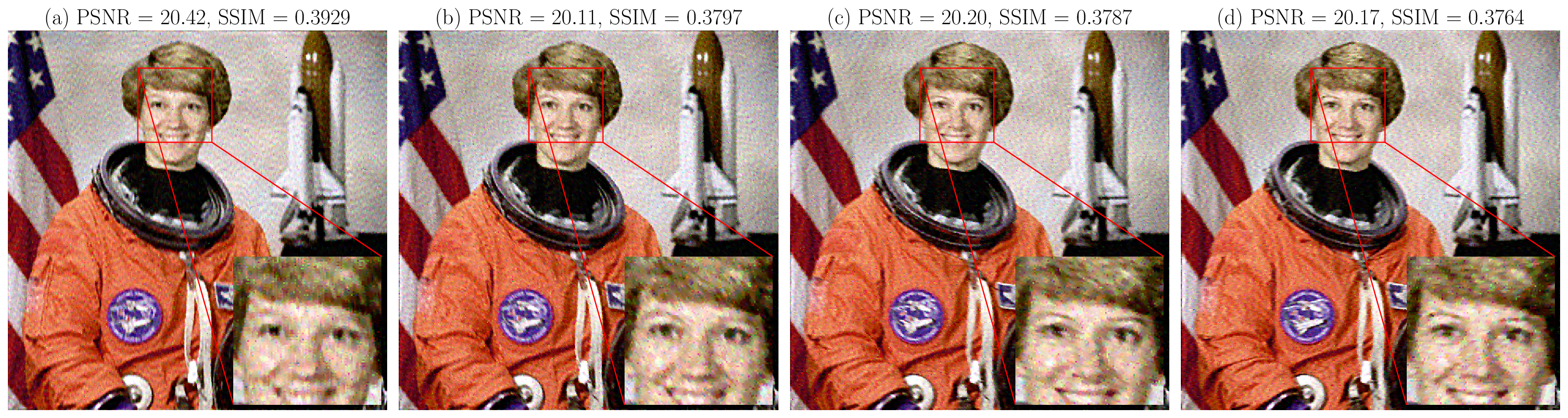}
  	\includegraphics[width=1.0\textwidth]{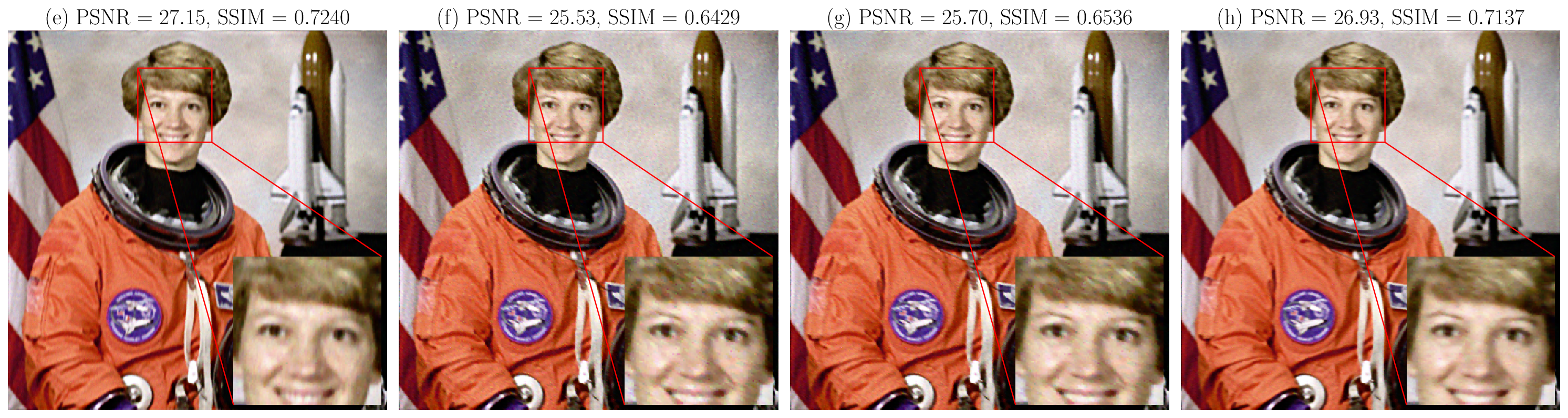}
  	\captionsetup{width=0.9\textwidth}
  	\caption{First row: $4$ deconvolved images; second row: the associated enhanced deconvolved images. The baseline image is the ground-truth sharp image $x$.  The PSNR (peak signal-to-noise ratio) and SSIM (structural similarity) values of each showed image are calculated with respect to the ground-truth image $x$.}
  	\label{figure4.3}
  \end{figure}
  
  The manifold fitting leverages information from $\{\widetilde{x}_j: ||\widetilde{x_j} - \widetilde{x_i}||_{\mathrm{F}} \leq r_1 \ \text{and} \ j \neq i\}$ to enhance the smoothness of $\widetilde{x_i}$. Specifically, the goal of this step is to denoise $\widetilde{x_i}$ by projecting it onto the latent manifold. The parameters in the manifold fitting are set as $r_1 = 108$ and $r_2 = 10 \times 108$, ensuring that each $\widetilde{x}_i$ has five neighbours at least within its local neighbourhood. As illustrated in each column of Figure \ref{figure4.3}, the enhanced deconvolved image $\widehat{x}_i^{\ast}$ appears significantly smoother in most local regions than the corresponding $\widetilde{x}_i$. Two objective metrics, PSNR (peak signal-to-noise ratio) and SSIM (structural similarity), are used to evaluate the effectiveness of enhancement through manifold fitting. The PSNRs of the four deconvolved images in Figure \ref{figure4.3} are improved by $32.96\%$, $26.95\%$, $27.22\%$, and $33.51\%$, respectively, while the SSIMs improve by $84.27\%$, $69.31\%$, $72.59\%$, and $89.61\%$. In Figure \ref{figure4.4}, these two box plots suggest that the manifold fitting does refine the deconvolved images. The effectiveness of manifold fitting can be also test by comparing those convolved images. As defined in equation (\ref{eq3.8}), the convolution is reintroduced onto $\widehat{x}_i^{\ast}$. Based on the simulated blur kernels $\{k_i\}_{i=1}^n$, we define the ground-truth convolved images $\{\mathcal{Y}_i\}_{i=1}^n$ as
  \begin{equation}
  	\mathcal{Y}_i = k_i \ast x, \ \text{for} \ i = 1,\cdots,n.
  	\label{eq4.1}
  \end{equation}
  In Figure \ref{figure4.6}, it can be seen that the noise in the enhanced convolved image $\widetilde{y}_i$ is significantly reduced while the structure of $\mathcal{Y}_i$ is sufficiently recovered. 
  \begin{figure}[!h]
  	\centering
  	\includegraphics[width=1.0\textwidth]{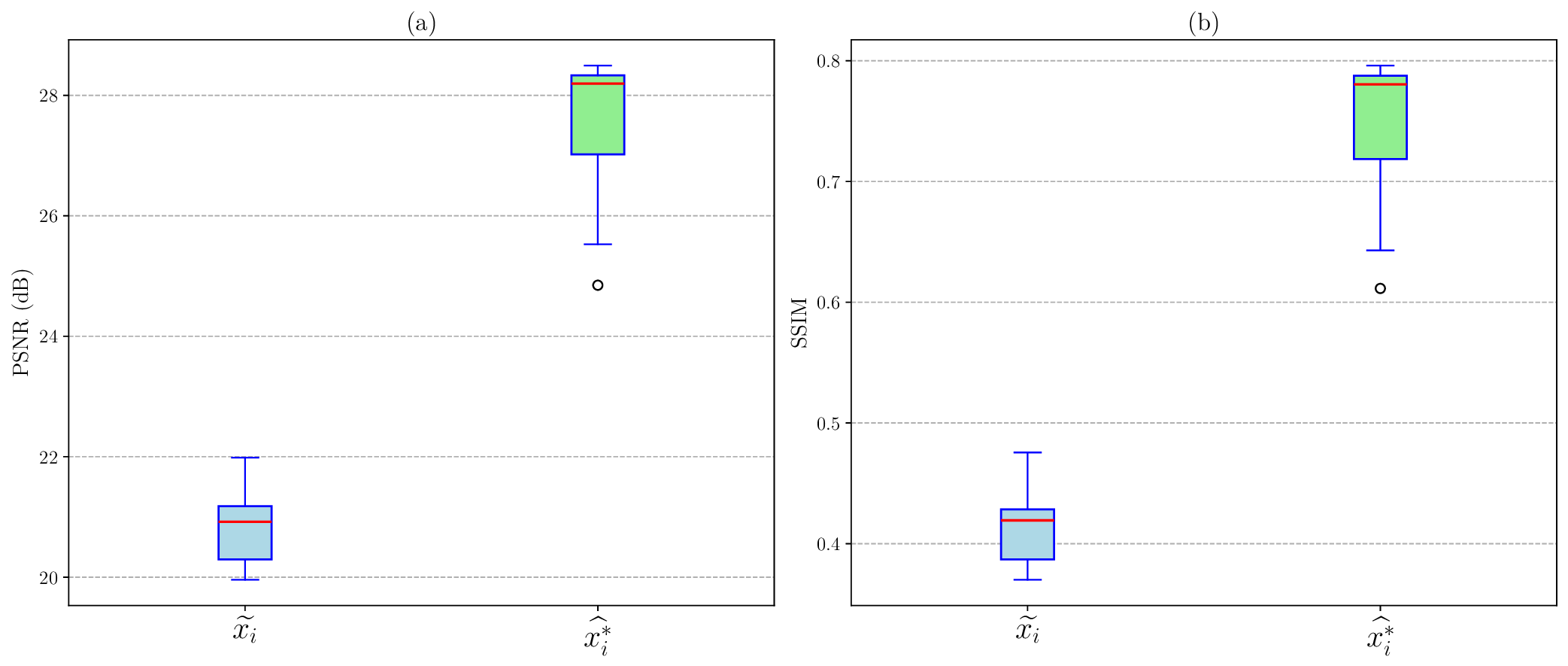}
  	\captionsetup{width=0.9\textwidth}
  	\caption{
(a) Comparison of PSNR values of deconvolved images $\{\widetilde{x}_i\}_{i=1}^n$ and those of enhanced (denoised) deconvolved images $\{\widehat{x}_i^{\ast}\}_{i=1}^n$. 
(b) Comparison of SSIM values of deconvolved images $\{\widetilde{x}_i\}_{i=1}^n$ and those of enhanced (denoised) deconvolved images $\{\widehat{x}_i^{\ast}\}_{i=1}^n$. The baseline image is the ground-truth sharp image $x$, i.e., the PSNR (peak signal-to-noise ratio) and SSIM (structural similarity) values of each $\widetilde{x}_i$ and $\widehat{x}_i^{\ast}$ are calculated with respect to the ground-truth image $x$.
}
 	\label{figure4.4}
  \end{figure}
  
  A single enhanced deconvolved image $\widehat{x}^{\ast}_i$ does not fully capture all local features, such as the astronaut's teeth (see the second row of Figure \ref{figure4.3}), and the effects of convolution and noise are not completely mitigated. Consequently, a single $\widehat{x}_i^{\ast}$ is not a fully satisfactory estimate of the sharp image $x$, though it approximates the latent image $x$ sufficiently in terms of PSNR and SSIM (as illustrated in Figure \ref{figure4.4}). Therefore, it is essential to consider the entire set $\{\widehat{x}_i^{\ast}\}_{i=1}^n$. By reintroducing convolution, enhanced convolved images $\{\widetilde{y}_i\}_{i=1}^n$ are generated using $\{\widehat{k}_i\}_{i=1}^n$ (as shown in equation (\ref{eq3.8})). Figure \ref{figure4.5} illustrates this process and compares an enhanced convolved image (Figure \ref{figure4.5}(d)) with its corresponding ground-truth convolved version (Figure \ref{figure4.1}(c)). One can observe that this $\widetilde{y}_i$ (Figure \ref{figure4.5} (d)) is significantly smoother than its original version $y_i$ (Figure \ref{figure4.5} (a)), with most noise eliminated.
  \begin{figure}[!h]
  	\centering
  	\includegraphics[width=1.0\textwidth]{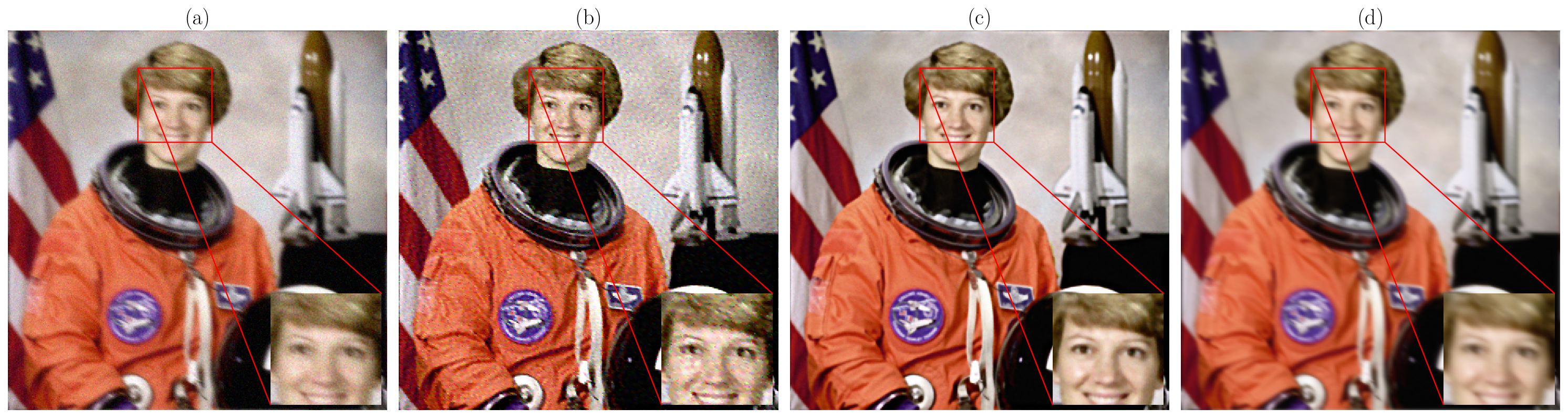}
  	\captionsetup{width=0.9\textwidth}
  	\caption{Process of denoising. (a) One blurred image $y_i$; (b) the associated deconvolved image $\widehat{x}_i$; (c) the associated enhanced deconvolved image $\widehat{x}^{\ast}_i$; (d) the associated enhanced convolved image $\widetilde{y_i}$.}
  	\label{figure4.5}
  \end{figure}
  
  Manifold fitting is also applied directly to the blurred images $\{y_i\}_{i=1}^n$, producing other enhanced convolved images denoted as $\ddot{y_1}, \cdots, \ddot{y_n}$, with tuning parameters $r_1 = 61$ and $r_2 = 10 \times 61$. This above values of $r_1$ and $r_2$ are chosen to maintain a comparable number of neighbouring images for each sample, consistent with the application of manifold fitting to the deconvolved images $\{\widehat{x_i}\}_{i=1}^n$. Nevertheless, as shown in Figure \ref{figure4.6}, the improvement achieved by applying manifold fitting directly to $\{y_i\}_{i=1}^n$ is less pronounced than that obtained for $\{\widehat{x_i}\}_{i=1}^n$. Applying manifold fitting to $\{\widehat{x}_i\}_{i=1}^n$ effectively assumes that the latent manifold represents the ground-truth image $x$, whereas applying it to $\{y_i\}_{i=1}^n$ assumes the latent manifold corresponds to the ground-truth convolved images $\{\mathcal{Y}_i\}_{i=1}^n$. Since the effect of convolution in $\{y_i\}_{i=1}^n$ is mitigated by solving equation (\ref{eq3.8}), $\{\widehat{x}_i\}_{i=1}^n$ complement each other in the source space and can be assembled to give an estimate of $x$. The essence of manifold fitting is to compute the weighted average of samples in the neighbourhood of any given sample, thereby significantly reducing noise in $\{\widehat{x}_i\}_{i=1}^n$. However, when $n$ is small, the blur kernel angles associated with each $y_i$ exhibit substantial variation, resulting in only a limited number of neighbouring blurred images for each $y_i$, often with blur kernels that differ widely in angle. This variability hinders the ability of manifold fitting to distinguish between the effects of convolution and noise. Consequently, the manifold fitting on $\{\widehat{x}_i\}_{i=1}^n$ performs better than the direct application on $\{y_i\}_{i=1}^n$ directly, particularly when sample sizes $n$ are limited. As $n$ increases, the blurred images within any neighbourhood will become approximate identically distributed due to the similar blur kernels. Consequently, applying manifold fitting to $\{y_i\}_{i=1}^n$ will yield results comparable to those obtained by applying it to $\{\widehat{x_i}\}_{i=1}^n$.
  \begin{figure}[!h]
  	\centering
  	\includegraphics[width=1.0\textwidth]{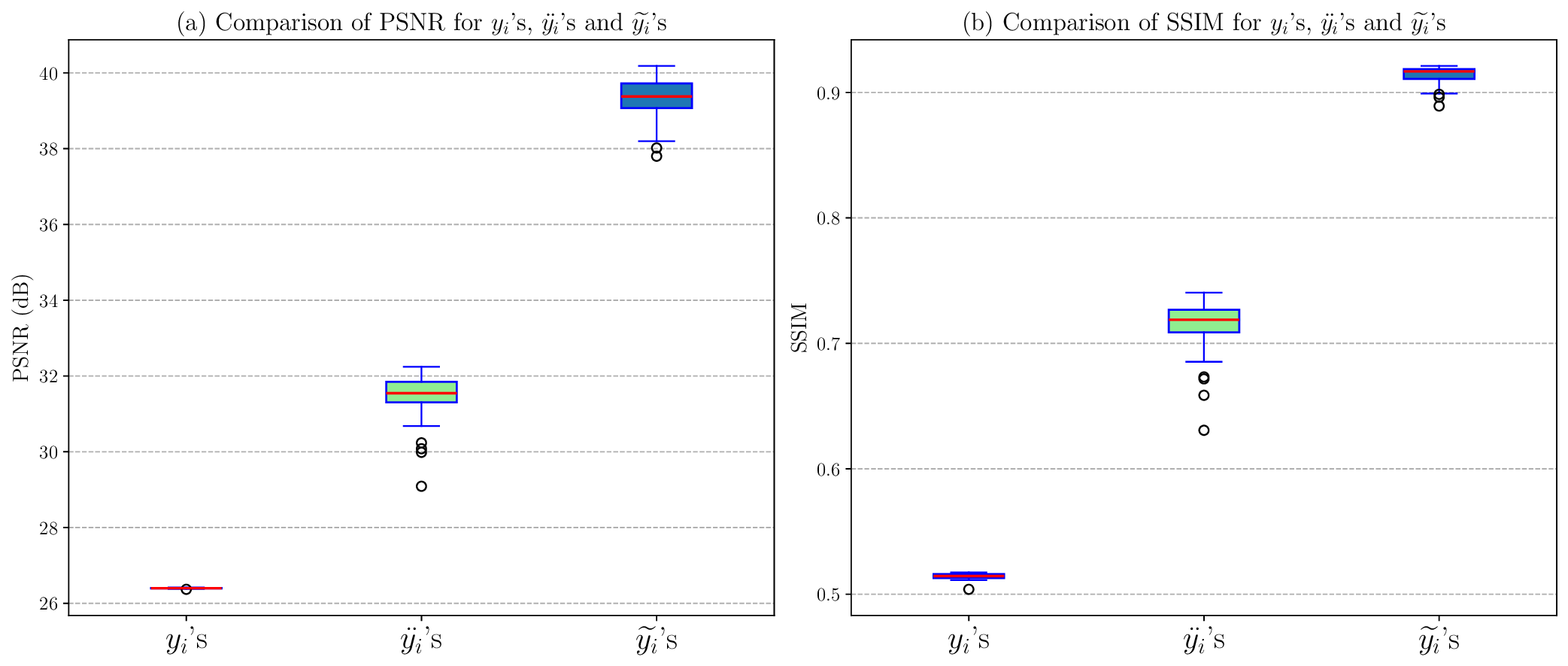}
  	\captionsetup{width=0.9\textwidth}
  	\caption{Comparisons of blurred images $\{y_i\}_{i=1}^n$, enhanced images(by applying manifold fitting on the blurred images directly) $\{\ddot{y}_i\}_{i=1}^n$'s and enhanced convolved images $\{\widetilde{y}_i\}_{i=1}^n$. The baseline images are $\{\mathcal{Y}_i\}_{i=1}^n$, i.e., the PSNR (peak signal-to-noise ratio) and SSIM (structural similarity) values of each image here are calculated with respect to the associated ground-truth convolved image $\mathcal{Y}_i$.}
  	\label{figure4.6}
  \end{figure}
  
  \subsection{Simulation results and discussion}
  In our final stage, we employ non-blind deconvolution (i.e., solving equation (\ref{eq3.8})) to reconstruct the sharp image (see Figure \ref{figure4.7} (d)), achieving a PSNR of $28.89$ dB and an SSIM of $0.7869$. The tuning parameters are set as follows: $\lambda_1=2000$ and $\alpha=0.80$. Our proposed method significantly enhances the quality of the reconstructed image, yielding a $13.34\%$ improvement in PSNR and a $36.19\%$ improvement in SSIM over conventional blind deconvolution (see Figure \ref{figure4.7} (a)). The noise visible in Figure \ref{figure4.7} (a) originates from the blurred images $\{y_i\}_{i=1}^n$. However, by mitigating this noise through manifold fitting, our method produces a better result in Figure \ref{figure4.7} (d), where local regions exhibit reduced noise and smoother texture. This outcome demonstrates that our approach effectively recovers a high-quality sharp image with more accurately estimated pixel intensities and preserved structural details, based on the enhanced convolved images $\{\widetilde{y}_i\}_{i=1}^n$. 
  
  Additionally, we reconstruct the latent image (Figure \ref{figure4.7} (b)) from an alternative set of enhanced deconvolved images $\{\ddot{y}_i\}_{i=1}^n$, which are produced by directly applying manifold fitting to $\{y_i\}_{i=1}^n$. This approach yields moderate improvements, with increases of $6.00\%$ in PSNR and $11.65\%$ in SSIM. As shown in the magnified local area (the face), though noise in Figure \ref{figure4.7} (b) is substantially reduced compared to Figure \ref{figure4.7} (a), the image appears nonsmoothed and blurred. In contrast, our method enhances the deconvolved images, ensuring that manifold fitting is based on more closely approximated identically distributed samples, thereby producing more reliable enhanced convolved images. Consequently, the reconstructed image (Figure \ref{figure4.7} (b)) achieves superior quality compared to Figure \ref{figure4.7} (d). The sharp image Figure \ref{figure4.7} (c) is derived by minimising $\sum\limits_{i=1}^n ||\widetilde{y}_i - \widehat{k}_i \ast x||_\mathrm{F}^2$. A comparison between Figures \ref{figure4.7} (c) and \ref{figure4.7} (d) highlights that including the prior in equation (\ref{eq3.4}) can improve PSNR and SSIM further. The prior term, $\log p_X(x)$, effectively stabilises the solution, suppresses noise, and preserves essential structural details, which is characteristic of both blind and non-blind deconvolution techniques.
  \begin{figure}[!h]
  	\centering
  	\includegraphics[width=1.0\textwidth]{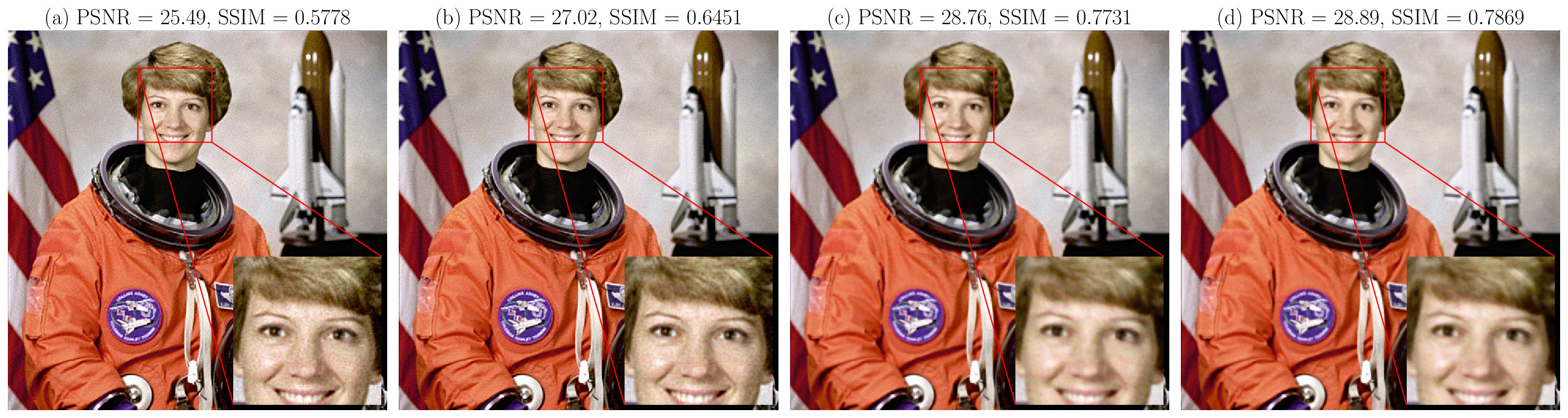}
  	\captionsetup{width=0.9\textwidth}
  	\caption{Reconstructed sharp images. (a) Zhou et al. (2021) (input images: $\{y_i\}_{i=1}^n$); (b) Zhou et al. (2021) (input images: $\{\ddot{y}_i\}_{i=1}^n$); (c) simple multi-frame blind deconvolution, $\mathcal{F}^{-1} \bigg(\frac{\sum\limits_{i=1}^n \overline{\mathcal{F}(\widetilde{k_i})} \ast \mathcal{F}(\widetilde{y}_i)}{\sum\limits_{i=1}^n \overline{\mathcal{F}(\widetilde{k}_i)} \ast \mathcal{F}(\widetilde{k}_i)} \bigg)$; (d) proposed. The baseline image is the ground-truth sharp image $x$, i.e., the PSNR (peak signal-to-noise ratio) and SSIM (structural similarity) values of each reconstructed sharp image are calculated with respect to the ground-truth image $x$.}
  	\label{figure4.7}
  \end{figure}
  
  We also assess the effect of the tuning parameter $r_1$ used in manifold fitting on reconstructing latent image for the above two methods, taking PSNR and SSIM as evaluation metrics. The PSNR curves reveal significant fluctuations for smaller values of $r_1$ (from $90$ to $105$). Compared to the method without reintroducing the convolution (i.e., reconstruction via equation (\ref{eq2.12})), our proposed method consistently achieves higher PSNR values, peaking at $r_1=110$. Beyond this point, PSNR declines for both methods. For SSIM, the proposed method similarly outperforms the alternative. Initially, the SSIM curves for both methods fluctuate, then increase and stabilise around $0.8$. These results demonstrate that for the proposed method, the optimal value $r_1$ is near $110$  in terms of reconstruction quality. We further assess the sensitivity of PSNR and SSIM to the value of tuning parameter $r_2$, finding that for a fixed $r_1$, the performance of the proposed method is not sensitive to value changes of $r_2$. This indicates that choosing five neighbours, as defined by $r_1, r_2$ and $\widetilde{x}_i$, is sufficient for achieving an  optimal improvement by using our proposed method. 
  \begin{figure}[!h]
  	\centering
  	\includegraphics[width=1.0\textwidth]{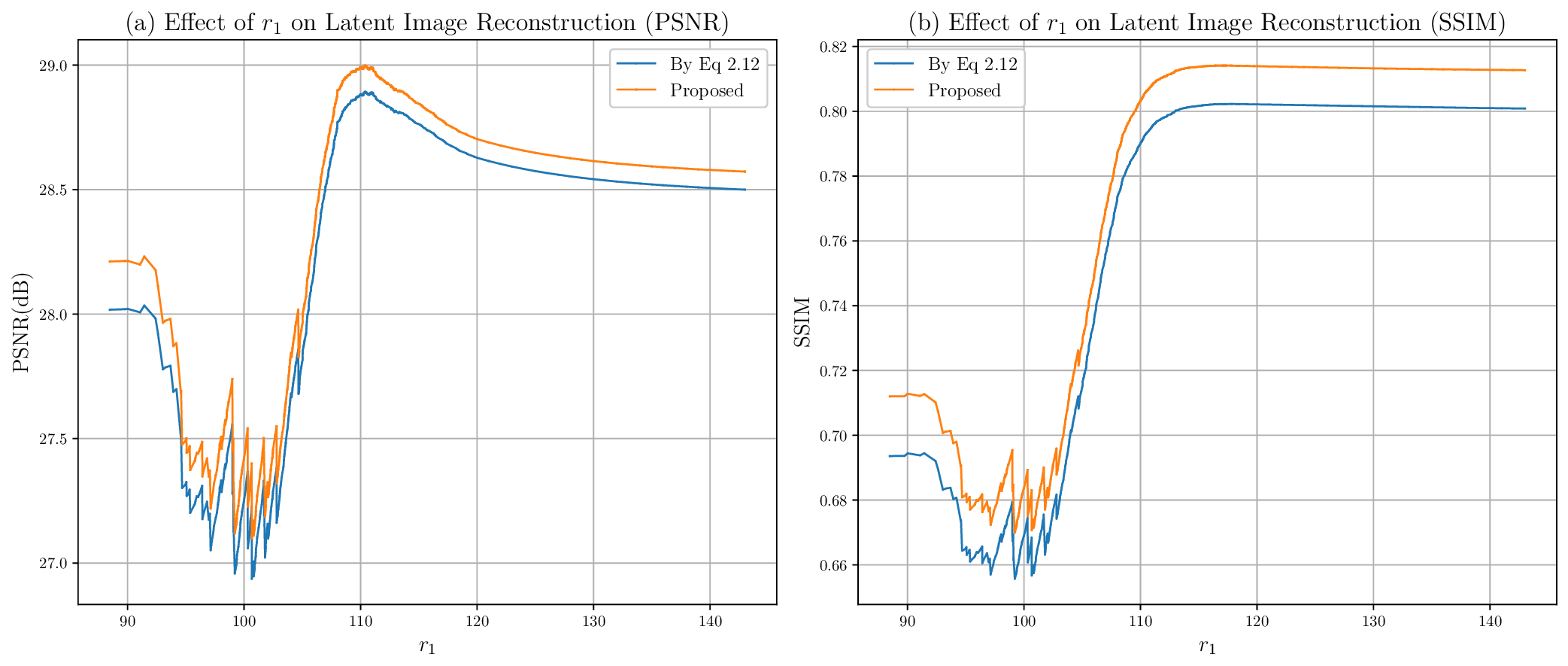}
  	\captionsetup{width=0.9\textwidth}
  	\caption{Comparison of the effect of $r_1$ on the latent image reconstruction quality, evaluated using two metrics: PSNR (peak signal-to-noise ratio) and SSIM (structural similarity):(a) Without prior (Equation 2.8). (b) With prior (Equation 2.9, proposed). }
  	\label{figure4.8}
  \end{figure}
  
\section{Conclusion}
\label{sec:conc}
 In this paper, we have hypothesised that images captured by a RSA imaging system can be modelled as noisy multi-frame convolutions of a latent sharp image with a sequence of PSFs and that the latent image is embedded into a low dimensional manifold within a high dimensional ambient space.
 Under this assumption,  we have developed a novel method to enhance images captured by the RSA imaging system and to reconstruct a sharper image in the framework of blind manifold deconvolution. In the proposed procedure, we first deconvolve RSA captured images by using the improved $\text{MAP}$ framework. We then fit a manifold to the deconvoluted images and enhance them by exploring their low dimensional manifold structures. Finally, we reintroduce a convolution operation on the enhanced images in order to produce denoised convoluted images for a sharp image reconstruction. Our simulation results have shown that the proposed method can outperform the conventional multi-frame blind deconvolution method in terms of estimating pixel intensities and preserving structural details. The results have also shown that fitting low dimensional manifolds to deconvolved acquired RSA images yields much better results than fitting manifolds directly to captured images. The proposed method can be further improved by incorporating recent advances in deep-learning-based deconvolution (Ren et al., 2020;Kotera et al.,2021) and in image fusion (Zhi et al., 2021; Sun et al., 2024). In this context, our proposed method offers a flexible image fusion framework centred on refined observed images. Deep learning approaches could be integrated in deblurring each captured image. Furthermore, deep learning could also be applied to reconstruct the final sharp image.

\section*{Funding}
This work of J.Z. was supported by the Engineering and Physical Sciences Research Council (EPSRC), United Kingdom,  under Grant EP/X038297/1.  The research of  D.L. is supported by the Chinese Scholarship Council (CSC)-University of Kent Scholarship.

\section*{Disclosure statement}
The authors report there are no competing interests to declare.

\section*{Acknowledgements}
 We are grateful to Professor Zhigang Yao, National University of Singapore for a fruitful discussion on his algorithm of  manifold fitting.

\appendix
\section{Appendix A: Convolution, DFT and Proof of Proposition \ref{proposition2.1}}
\label{Appendix A}
  \begin{proposition}
    \begin{equation}
  	 	\mathop{\arg\min}\limits_x \sum\limits_{i=1}^n ||y_i - k_i \ast x||_{\mathrm{F}}^2  = \mathcal{F}^{-1} \bigg(\frac{\sum\limits_{i=1}^n \overline{\mathcal{F}(\widetilde{k_i})} \odot \mathcal{F}(y_i)}{\sum\limits_{i=1}^n \overline{\mathcal{F}(\widetilde{k_i})} \odot \mathcal{F}(\widetilde{k_i})} \bigg) = \mathcal{F}^{-1} \bigg(\sum\limits_{i=1}^n \frac{\overline{\mathcal{F}(\widetilde{k_i})} \odot \mathcal{F}(\widetilde{k_i})}{\sum\limits_{j=1}^n \overline{\mathcal{F}(\widetilde{k_j})} \odot \mathcal{F}(\widetilde{k_j})} \odot \mathcal{F}(\widetilde{x_i}) \bigg),
  	 	\label{eq2.11}
  	\end{equation}
  	\label{proposition2.1}
  \end{proposition}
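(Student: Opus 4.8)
The plan is to move the entire problem into the frequency domain via the discrete Fourier transform $\mathcal{F}$, where the convolution becomes a pointwise (Hadamard) product and the least-squares objective decouples across frequencies. First I would invoke the convolution theorem in the padded form established in this appendix, namely $\mathcal{F}(k_i \ast x) = \mathcal{F}(\widetilde{k_i}) \odot \mathcal{F}(x)$, together with Parseval's identity, under which the Frobenius norm is preserved up to a fixed positive scaling constant. Applying these to each summand rewrites the objective, up to that constant, as
\[
\sum_{i=1}^n \big\| \mathcal{F}(y_i) - \mathcal{F}(\widetilde{k_i}) \odot \mathcal{F}(x) \big\|_{\mathrm{F}}^2 .
\]
Since $\mathcal{F}$ is a linear bijection, minimising over $x$ is equivalent to minimising over its transform $\mathcal{F}(x)$, and the fixed constant does not affect the minimiser.

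The key observation is that, because $\odot$ acts entrywise, the rewritten objective separates into independent scalar complex least-squares problems, one for each frequency (and each channel). At a single frequency, writing $\hat{y}_i$, $\hat{k}_i$ and $\hat{x}$ for the corresponding entries of $\mathcal{F}(y_i)$, $\mathcal{F}(\widetilde{k_i})$ and $\mathcal{F}(x)$, we minimise $\sum_{i=1}^n |\hat{y}_i - \hat{k}_i \hat{x}|^2$. Setting the complex derivative with respect to $\hat{x}$ to zero gives the normal equation $\sum_i \overline{\hat{k}_i}(\hat{y}_i - \hat{k}_i \hat{x}) = 0$, whence $\hat{x} = (\sum_i \overline{\hat{k}_i}\,\hat{y}_i)/(\sum_i \overline{\hat{k}_i}\,\hat{k}_i)$ provided the denominator is nonzero. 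Reassembling these entrywise solutions into Hadamard form over all frequencies and applying $\mathcal{F}^{-1}$ yields the first claimed expression, which establishes the first equality.

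For the second equality I would substitute the defining relation of the deconvolved images, $\mathcal{F}(y_i) = \mathcal{F}(\widetilde{k_i}) \odot \mathcal{F}(\widetilde{x_i})$, into the numerator of the first expression. This replaces $\overline{\mathcal{F}(\widetilde{k_i})} \odot \mathcal{F}(y_i)$ by $\overline{\mathcal{F}(\widetilde{k_i})} \odot \mathcal{F}(\widetilde{k_i}) \odot \mathcal{F}(\widetilde{x_i})$; distributing the common denominator across the sum then regroups the ratio into the stated weighted average of the $\mathcal{F}(\widetilde{x_i})$ with entrywise weights $\overline{\mathcal{F}(\widetilde{k_i})} \odot \mathcal{F}(\widetilde{k_i}) \big/ \sum_{j=1}^n \overline{\mathcal{F}(\widetilde{k_j})} \odot \mathcal{F}(\widetilde{k_j})$. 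This step is purely algebraic.

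The main obstacle is not the algebra but the careful bookkeeping underlying the first equality: one must fix the zero-padding convention so that the linear convolution $\ast$ coincides with the circular convolution that $\mathcal{F}$ diagonalises, track the Parseval scaling constant (which is common to all terms and hence harmless to the minimiser), respect the per-channel action of $\ast$ so that the frequency decoupling also holds channelwise, and assume the well-posedness condition that $\sum_{i=1}^n \overline{\mathcal{F}(\widetilde{k_i})} \odot \mathcal{F}(\widetilde{k_i})$ is entrywise strictly positive, which justifies the entrywise division. Once these conventions are pinned down, the per-frequency solution and the substitution combine immediately to give both equalities.
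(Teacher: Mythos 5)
Your proposal is correct and follows essentially the same route as the paper's own proof: both pass to the frequency domain via the convolution theorem and Parseval's identity, decouple the objective entrywise, and solve the resulting per-frequency normal equations, with your complex (Wirtinger) derivative step being equivalent to the paper's separate differentiation in the real and imaginary parts of $\mathcal{F}(x)(c,w,h)$. The only additions are that you spell out the second equality explicitly (substituting $\mathcal{F}(y_i)=\mathcal{F}(\widetilde{k_i})\odot\mathcal{F}(\widetilde{x_i})$ and regrouping) and note the entrywise nonvanishing of the denominator, both of which the paper leaves implicit.
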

 where $\odot$ and $\frac{\mathbf{A}}{\mathbf{B}}$ denote the Hadamard (element-wise) product and division respectively, $\mathcal{F}(\cdot)$ is the 2-dimensional discrete Fourier transform (DFT), and $\mathcal{F}^{-1}(\cdot)$ is the 2-dimensional inverse discrete Fourier transform (IDFT). DFT and IDFT are both applied channel-wise. Here, $\widetilde{k_i}$ is the padded version of $\widehat{k_i}$, as detailed in Appendix \ref{Appendix A}. The term $\frac{\overline{\mathcal{F}(\widetilde{k_i})} \odot \mathcal{F}(\widetilde{k_i})}{\sum\limits_{j=1}^n \overline{\mathcal{F}(\widetilde{k_j})} \odot \mathcal{F}(\widetilde{k_j})}$ represents the normalised weight of each $\widetilde{x_i}$, where $\widetilde{x_i}$ can be seen as the deconvolution of the blurred image $y_i$ with its corresponding blur kernel $\widehat{k_i}$.

To facilitate the proof of Proposition \ref{proposition2.1}, we need to introduce more notations.
  For any arbitrary image $x \in \mathbb{R}^{C \times a \times b}$ , the image domain is defined as $\mathbb{A} := \{(c,w,h) \in \mathbb{N}^3: 0 \leq c \leq C-1, 0 \leq w \leq a-1, 0 \leq h \leq b-1 \}$. The $2$-dimensional DFT applied channel-wise is defined as:
  \begin{equation}
  	\mathcal{F}(x)(c,w,h) := \sum\limits_{u=0}^{a-1} \sum\limits_{v=0}^{b-1} x(c,u,v) \exp \big[-2 \pi \mathbf{i} (\frac{uw}{a} + \frac{vh}{b}) \big], \ \text{for} \ c=0,\cdots,C-1,
  	\label{eqA.1}
  \end{equation}
  where $\mathcal{F}(x) \in \mathbb{C}^{C \times a \times b}$ shares the same image domain $\mathbb{A}$ as $x$ and $\mathbf{i}$ represents the imaginary unit.

  The convolution theorem implies that the periodic convolution operation can be expressed as the Hadamard product in the frequency domain. Normally, the kernel size $s$ is assumed to be an odd number, sufficiently smaller than $a$ and $b$, such that $s=2s'+1$ and $s'$ is a positive integer. Each $k_i$, is indexed as $\{(w,h) \in \mathbb{Z}^2: -s' \leq w \leq s', -s' \leq h \leq s' \}$. In the case of $2$-dimensional discrete convolution, the \textit{periodic convolution} refers to the use of wrap-around (or circular) padding. This ensures that $k_i \otimes x$ has the same dimensions as $x$ and its each element is computed as:
  \begin{equation}
    (k_i \ast x) (c,w,h) := \sum\limits_{u=-s'}^{s'} \sum\limits_{v=-s'}^{s'} k_i(u,v) \cdot x \bigg(c, (w-u) \mod a, (h-v) \mod b \bigg),
    \label{eqA.2}
  \end{equation}
  where the image domain of $(k_i \ast x) \in \mathbb{R}^{C \times a \times b}$ is also $\mathbb{A}$. To use the convolution theorem, the kernel $k$ should also be padded so that its shape becomes $C \times a \times b$ and then be shifted. The conversion process is defined as:
  \begin{equation}
   \begin{gathered}
    \widetilde{k}_i(c,w,h) := 
    \begin{cases}
      k_i(w,h), \ \text{if} \ 0 \leq w \leq s' \ \text{and} \ 0 \leq h \leq s', \\
      k_i(w-a,h), \ \text{if} \ a-s' \leq w \leq a-1 \ \text{and} \ 0 \leq h \leq s', \\
      k_i(w,h-b), \ \text{if} \ 0 \leq w \leq s' \ \text{and} \ b-s' \leq h \leq b-1, \\
      k_i(w-a,h-b), \ \text{if} \ a-s' \leq w \leq a-1 \ \text{and} \ b-s' \leq h \leq b-1, \\
      0, \ \text{otherwise},
    \end{cases}
   \end{gathered}
   \label{eqA.3}
  \end{equation}
  for $c=0,\cdots,C-1$. Here, $\widetilde{k}_i \in \mathbb{R}^{C \times a \times b}$ is the converted kernel with its domain defined as $\mathbb{A}$.
    
  Then, the periodic convolution can be calculated based the convolution theorem:
  \begin{lemma}
    $k_i \ast x = \mathcal{F}^{-1}\bigg( \mathcal{F}(\widetilde{k_i}) \odot \mathcal{F}(x))\bigg)$.
    \label{LemmaA.1}
  \end{lemma}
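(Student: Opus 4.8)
The plan is to reduce the windowed periodic convolution of equation (\ref{eqA.2}) to a full $a\times b$ circular convolution against the wrap-padded kernel $\widetilde{k}_i$ defined in equation (\ref{eqA.3}), and then to verify the classical discrete convolution theorem for that full-grid circular convolution. Since both $\mathcal{F}(\cdot)$ and $\ast$ act channel-wise, I would fix a channel $c$ throughout and establish the identity at each frequency $(w,h)$; the full-array statement then follows by ranging over $c=0,\dots,C-1$. A final application of the inverse transform $\mathcal{F}^{-1}$ to both sides of the resulting frequency-domain identity produces the stated form.

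First I would establish the reduction identity
\begin{equation}
(k_i \ast x)(c,w,h) = \sum_{u=0}^{a-1}\sum_{v=0}^{b-1} \widetilde{k}_i(c,u,v)\, x\bigl(c,(w-u)\bmod a,(h-v)\bmod b\bigr),
\end{equation}
valid at every $(c,w,h)$. The content of definition (\ref{eqA.3}) is precisely that it places the value $k_i(u,v)$, for $-s'\le u,v\le s'$, at the grid site $(u\bmod a,\,v\bmod b)$ and sets $\widetilde{k}_i$ to zero elsewhere. Because $s=2s'+1<\min\{a,b\}$, the four corner blocks listed in (\ref{eqA.3}) are pairwise disjoint, so the map $(u,v)\mapsto(u\bmod a,v\bmod b)$ is a collision-free embedding of the support $\{-s',\dots,s'\}^2$ into the grid. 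Substituting this into the windowed sum (\ref{eqA.2}) and using that $(w-u)\bmod a$ equals $(w-p)\bmod a$ whenever $p=u\bmod a$ extends the finite-support sum to the full grid without changing its value, since every added term carries the factor $\widetilde{k}_i=0$.

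Next I would apply the DFT (\ref{eqA.1}). Writing the transform's spatial variable as $(p,q)$ and invoking the reduction identity at the point $(p,q)$, I obtain
\begin{equation}
\mathcal{F}(k_i\ast x)(c,w,h) = \sum_{p,q}\sum_{u,v}\widetilde{k}_i(c,u,v)\,x\bigl(c,(p-u)\bmod a,(q-v)\bmod b\bigr)\,\exp\!\bigl[-2\pi\mathbf{i}\bigl(\tfrac{pw}{a}+\tfrac{qh}{b}\bigr)\bigr].
\end{equation}
For each fixed $(u,v)$ I substitute $m=(p-u)\bmod a$ and $n=(q-v)\bmod b$; as $(p,q)$ runs over the grid so does $(m,n)$, with $p\equiv m+u\pmod{a}$ and $q\equiv n+v\pmod{b}$. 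The decisive step is the periodicity of the character: since $\exp[-2\pi\mathbf{i}\,pw/a]$ depends only on $p\bmod a$, it factors as $\exp[-2\pi\mathbf{i}\,uw/a]\exp[-2\pi\mathbf{i}\,mw/a]$, and likewise in the second coordinate. The double sum therefore separates into
\begin{equation}
\Bigl(\sum_{u,v}\widetilde{k}_i(c,u,v)\,e^{-2\pi\mathbf{i}(uw/a+vh/b)}\Bigr)\Bigl(\sum_{m,n} x(c,m,n)\,e^{-2\pi\mathbf{i}(mw/a+nh/b)}\Bigr)=\mathcal{F}(\widetilde{k}_i)(c,w,h)\,\mathcal{F}(x)(c,w,h),
\end{equation}
which is the value of the Hadamard product $\mathcal{F}(\widetilde{k}_i)\odot\mathcal{F}(x)$ at $(c,w,h)$.

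Since this holds for every channel $c$ and frequency $(w,h)$, we obtain $\mathcal{F}(k_i\ast x)=\mathcal{F}(\widetilde{k}_i)\odot\mathcal{F}(x)$; applying $\mathcal{F}^{-1}$ to both sides yields $k_i\ast x=\mathcal{F}^{-1}\bigl(\mathcal{F}(\widetilde{k}_i)\odot\mathcal{F}(x)\bigr)$, as claimed. The character factorization of the third paragraph is the standard and routine half of the argument. The step demanding genuine care is the reduction of the second paragraph: one must verify that (\ref{eqA.3}) realises $\widetilde{k}_i$ as exactly the circular embedding of $k_i$, so that the modular index map is a bijection onto disjoint blocks. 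It is here, and only here, that the hypothesis $s<\min\{a,b\}$ is essential, guaranteeing that no wrap-around overlap corrupts the padded kernel.
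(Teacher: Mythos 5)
Your proof is correct. The paper states Lemma \ref{LemmaA.1} without proof, presenting it as an instance of the standard discrete convolution theorem; your argument --- reducing the windowed periodic convolution of equation (\ref{eqA.2}) to a full-grid circular convolution against the wrap-padded kernel of equation (\ref{eqA.3}), then factorising the DFT through the change of variables and the periodicity of the complex exponential --- is exactly the standard verification the paper leaves implicit, and you correctly isolate the one step requiring care, namely that $(u,v)\mapsto(u \bmod a,\, v \bmod b)$ embeds the $(2s'+1)\times(2s'+1)$ kernel support into the grid without collisions, which is where the assumption $s \ll \min\{a,b\}$ enters.
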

  
  \begin{lemma}
    Parseval Equality: for any $\mathbf{X} \in \mathbb{R}^{C \times a \times b}$ having domain $\mathbb{A}$,
    \begin{equation}
     ||\mathbf{X}||_{\mathrm{F}}^2 := \sum\limits_{c=0}^{C-1}  \sum\limits_{w=0}^{a-1} \sum\limits_{h=0}^{b-1} |\mathbf{X}(c,w,h)|^2 = \sum\limits_{c=0}^{C-1} \sum\limits_{w=0}^{a-1} \sum\limits_{h=0}^{b-1} \frac{1}{ab} |\mathcal{F}(\mathbf{X})(c,w,h)|^2
    \label{eqA.4}
    \end{equation}
    \label{lemmaA.2}
    \end{lemma}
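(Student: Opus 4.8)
The plan is to prove Parseval's equality (\ref{eqA.4}) directly from the definition of the channel-wise DFT in equation (\ref{eqA.1}), reducing the statement to a single channel and then invoking the orthogonality of the discrete complex exponentials. First, since both the Frobenius norm and the DFT act separately on each channel $c$, and both sides of (\ref{eqA.4}) are sums over $c$ of channel-wise quantities, it suffices to fix a single channel and establish
\[
\sum_{w=0}^{a-1}\sum_{h=0}^{b-1}|x(w,h)|^2 \;=\; \frac{1}{ab}\sum_{w=0}^{a-1}\sum_{h=0}^{b-1}|\mathcal{F}(x)(w,h)|^2
\]
for a single two-dimensional array $x(w,h)$. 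Summing the resulting identity over $c=0,\dots,C-1$ then recovers the full claim.

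Next I would expand the right-hand side by writing $|\mathcal{F}(x)(w,h)|^2 = \mathcal{F}(x)(w,h)\,\overline{\mathcal{F}(x)(w,h)}$ and substituting the definition (\ref{eqA.1}) for both factors, using dummy indices $(u,v)$ for the first factor and $(u',v')$ for its conjugate. The two exponentials combine into $\exp\!\big[2\pi\mathbf{i}\big((u'-u)w/a + (v'-v)h/b\big)\big]$, producing a quadruple sum over $(u,v,u',v')$ of $x(u,v)\overline{x(u',v')}$ together with an outer double sum over the frequency variables $(w,h)$. Interchanging the order of summation so that the sum over $(w,h)$ is innermost isolates the factor
\[
\Big(\sum_{w=0}^{a-1}e^{2\pi\mathbf{i}(u'-u)w/a}\Big)\Big(\sum_{h=0}^{b-1}e^{2\pi\mathbf{i}(v'-v)h/b}\Big).
\]

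The key step, and the only part that is more than bookkeeping, is the orthogonality identity for discrete exponentials: $\sum_{w=0}^{a-1}e^{2\pi\mathbf{i} m w/a}$ equals $a$ when $m\equiv 0\ (\mathrm{mod}\ a)$ and $0$ otherwise, which follows by summing a finite geometric series. Since the indices satisfy $0\le u,u'\le a-1$, the first factor reduces to $a\,\delta_{u,u'}$ and likewise the second to $b\,\delta_{v,v'}$. Applying these collapses the quadruple sum onto the diagonal $u=u',\,v=v'$, leaving $\tfrac{1}{ab}\cdot ab\sum_{u,v}x(u,v)\overline{x(u,v)} = \sum_{u,v}|x(u,v)|^2$, which matches the left-hand side. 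I expect no genuine obstacle beyond carefully tracking the sign convention in the exponent and confirming that the modular condition reduces to an ordinary Kronecker delta on the index range $0,\dots,a-1$ (and analogously $0,\dots,b-1$), so that no off-diagonal terms survive.
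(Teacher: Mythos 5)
Your proposal is correct and complete: the channel-wise reduction is legitimate since the DFT in equation (\ref{eqA.1}) acts on each channel independently, the expansion of $|\mathcal{F}(x)(w,h)|^2$ into a quadruple sum is routine, and the orthogonality identity $\sum_{w=0}^{a-1}e^{2\pi\mathbf{i}mw/a}=a\,\delta_{m\equiv 0 \ (\mathrm{mod}\ a)}$ together with the observation that $|u'-u|\le a-1$ forces the modular condition to collapse to $u=u'$ (and likewise $v=v'$) is exactly the right key step. Note that the paper itself states Lemma \ref{lemmaA.2} without any proof, invoking it as a standard fact in the proof of Proposition \ref{proposition2.1}, so there is no paper argument to compare against; your geometric-series orthogonality argument is the standard one and correctly supplies the omitted details, including the sign convention in the exponent, which is immaterial here since only the modulus of the exponential sums enters.
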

    
  The proof of Proposition \ref{proposition2.1} is provided:
  \begin{proof}
    Using Lemma\ref{lemmaA.2}, we have
    \begin{equation}
      \begin{split}
        L = \sum\limits_{i=1}^n ||y_i - k_i \ast x||_{\mathrm{F}}^2 &\propto \sum\limits_{i=1}^n \sum\limits_{c=0}^{C-1} \sum\limits_{w=0}^{a-1} \sum\limits_{h=0}^{b-1} |\mathcal{F}(y_i - k_i \ast x)(c,w,h)|^2 \\
        &= \sum\limits_{i=1}^n \sum\limits_{c=0}^{C-1} \sum\limits_{w=0}^{a-1} \sum\limits_{h=0}^{b-1} |\mathcal{F}(y_i)(c,w,h) - \mathcal{F}(\widetilde{k_i})(c,w,h) \cdotp \mathcal{F}(x)(c,w,h)|^2 \\
        &= \sum\limits_{i=1}^n \sum\limits_{c=0}^{C-1} \sum\limits_{w=0}^{a-1} \sum\limits_{h=0}^{b-1} \bigl\{|\mathcal{F}(y_i)(c,w,h)|^2 + |\mathcal{F}(\widetilde{k_i})(c,w,h)|^2 \cdotp |\mathcal{F}(x)(c,w,h)|^2 \\
        &- \overline{\mathcal{F}(y_i)(c,w,h)} \cdotp \mathcal{F}(\widetilde{k_i})(c,w,h) \cdotp \mathcal{F}(x)(c,w,h) \\
        &- \mathcal{F}(y_i)(c,w,h) \cdotp \overline{\mathcal{F}(\widetilde{k_i})(c,w,h)} \cdotp \overline{\mathcal{F}(x)(c,w,h)} \bigr\}
      \end{split}
      \label{eqA.5}
    \end{equation}
    We represent the real and imaginary parts of an arbitrary complex number $z$ as $\textbf{Re}(z)$ and $\textbf{Im}(z)$, respectively. We then calculate the first-order derivatives of $L$ in terms of $\textbf{Re}\big(\mathcal{F}(x)(c,w,h) \big)$ and $\textbf{Im}\big(\mathcal{F}(x)(c,w,h) \big)$:
    \begin{eqnarray}
      \frac{\partial L}{\partial \textbf{Re}\big(\mathcal{F}(x)(c,w,h) \big)}& =& 2\textbf{Re}\big(\mathcal{F}(x)(c,w,h) \big)\sum\limits_{i=1}^n |\mathcal{F}(\widetilde{k_i})(c,w,h)|^2\nonumber \\
    &&- 2\sum\limits_{i=1}^n \textbf{Re}\big(\overline{\mathcal{F}(\widetilde{k_i})(c,w,h)} \cdotp \mathcal{F}(y_i)(c,w,h) \big)
      \label{eqA.6}
     \end{eqnarray}
     \begin{eqnarray}
       \frac{\partial L}{\partial \textbf{Im}\big(\mathcal{F}(x)(c,w,h) \big)} &=& 2\textbf{Im}\big(\mathcal{F}(x)(c,w,h) \big)\sum\limits_{i=1}^n |\mathcal{F}(\widetilde{k_i})(c,w,h)|^2\nonumber\\
& &- 2\sum\limits_{i=1}^n \textbf{Im}\big(\overline{\mathcal{F}(\widetilde{k_i})(c,w,h)} \cdotp \mathcal{F}(y_i)(c,w,h) \big)
       \label{eqA.7}
     \end{eqnarray}
    Setting these two first-order derivatives to zero, we obtain:
    \begin{equation}
      \begin{gathered}
        \begin{cases}
          \textbf{Re}\big(\mathcal{F}(\widehat{x})(c,w,h) \big) = \frac{\sum\limits_{i=1}^n \textbf{Re}\big(\overline{\mathcal{F}(\widetilde{k_i})(c,w,h)} \cdotp \mathcal{F}(y_i)(c,w,h) \big)}{\sum\limits_{i=1}^n |\mathcal{F}(\widetilde{k_i})(c,w,h)|^2} \\
          \textbf{Im}\big(\mathcal{F}(\widehat{x})(c,w,h) \big) = \frac{\sum\limits_{i=1}^n \textbf{Im}\big(\overline{\mathcal{F}(\widetilde{k_i})(c,w,h)} \cdotp \mathcal{F}(y_i)(c,w,h) \big)}{\sum\limits_{i=1}^n |\mathcal{F}(\widetilde{k_i})(c,w,h)|^2}.
        \end{cases}
      \end{gathered}
      \label{eqA.8}	
    \end{equation}
    Combining the real and imaginary parts of $\mathcal{F}(\widehat{x})(c,w,h)$ and considering all $(c,w,h)$, the estimated image $\widehat{x}$ is reconstructed by the $2$-dimensional Inverse Discrete Fourier transform (IDFT):
    \begin{equation}
      \widehat{x} = \mathcal{F}^{-1} \bigg(\frac{\sum\limits_{i=1}^n \overline{\mathcal{F}(\widetilde{k_i})} \odot \mathcal{F}(y_i)}{\sum\limits_{i=1}^n \overline{\mathcal{F}(\widetilde{k_i})} \odot \mathcal{F}(\widetilde{k_i})} \bigg).
      \label{eqA.9}
    \end{equation}
  \end{proof}

\section{Appendix B: Half-quadratic Splitting}
\label{Appendix B}
  After introducing two auxiliary variables $\gamma_h \in \mathbb{R}^{C \times b \times a}$ and $\gamma_v \in \mathbb{R}^{C \times b \times a}$, equation (\ref{eqxalp}) is modified to  
  \begin{equation}
    (\widehat{x}, \widehat{\gamma_h}, \widehat{\gamma_v}) = \mathop{\arg\min}\limits_{x,\gamma_h,\gamma_v} \bigg(\lambda \sum\limits_{i=1}^{n} ||y_i - \widehat{k_i} \ast x||_{\mathrm{F}}^2 + \sum\limits_{j \in \{h,v\}} ||\textbf{vec}(\gamma_j)||_{\alpha}^{\alpha} + \frac{\beta}{2} \sum\limits_{j \in \{h,v\}} ||\gamma_j - G_j \ast x||_{\mathrm{F}}^2 \bigg).
    \label{eqB.1}
  \end{equation}
   Under the framework of half-quadratic splitting, the solution of equation (\ref{eqB.1}) converges to that of equation (\ref{eqxalp}) as $\beta \rightarrow \infty$. Equation (\ref{eqB.1}) is divided into two sub-problems:
  \begin{equation}
    \widehat{x} = \mathop{\arg\min}\limits_{x,\gamma_h,\gamma_v} \bigg(\lambda \sum\limits_{i=1}^{n} ||y_i - \widehat{k_i} \ast x||_{\mathrm{F}}^2 + \frac{\beta}{2} \sum\limits_{j \in \{h,v\}} ||\gamma_j - G_j \ast x||_{\mathrm{F}}^2 \bigg), \ \text{and}
    \label{eqB.2}
  \end{equation}
  \begin{equation}
    \widehat{\gamma_j} = \mathop{\arg\min}\limits_{\gamma_j} \bigg(||\textbf{vec}(\gamma_j)||_{\alpha}^{\alpha} + \frac{\beta}{2} ||\gamma_j - f_j \ast x||_{\mathrm{F}}^2 \bigg) \ \text{for} \ j=h,v.
    \label{eqB.3}
  \end{equation}
  Using Proposition \ref{proposition2.1}, an approximation solution of equation (\ref{eqB.2}) is given by:
  \begin{equation}
    \widehat{x} = \mathcal{F}^{-1} \bigg(\frac{2\lambda \sum\limits_{i=1}^n \overline{\mathcal{F}(\widetilde{k_i})} \odot \mathcal{F}(y_i) + \beta \sum\limits_{j \in \{h,v\}} \overline{\mathcal{F}(\widetilde{G_j})} \odot \mathcal{F}(\gamma_j)}{2\lambda \sum\limits_{i=1}^n \overline{\mathcal{F}(\widetilde{k_i})} \odot \mathcal{F}(\widetilde{k_i}) + \beta \sum\limits_{j \in \{h,v\}} \overline{\mathcal{F}(\widetilde{G_j})} \odot \mathcal{F}(\widetilde{G_j})} \bigg),
    \label{eqB.4}
  \end{equation}
  where $\widetilde{G_j}$ is the padded version of $G_j$ using equation (\ref{eqA.3}).
  Using the IRLS algorithm, the approximate solution of equation (\ref{eqB.3}) is given by:
  \begin{equation}
    \gamma_j^t = \frac{\beta (G_j \ast x)}{\beta + \alpha [(\gamma_j^{t-1} \odot \gamma_j^{t-1}) \oplus \epsilon]^{\frac{\alpha}{2}-1}}.
    \label{eqB.5}
  \end{equation}
  Here, $\oplus$ is defined as the element-wise addition and the exponent ($\frac{p}{2}-1$) is applied element-wise. The full algorithm for solving equation (\ref{eqB.1}) is summarised in Figure \ref{flowchart1}. In our simulation, the relevant parameters are set as follows: $\beta_{\text{ini}}=1.0, \beta_{\text{max}}=10^{20}, r=2, T=5, \epsilon=10^{-9}$.
  
  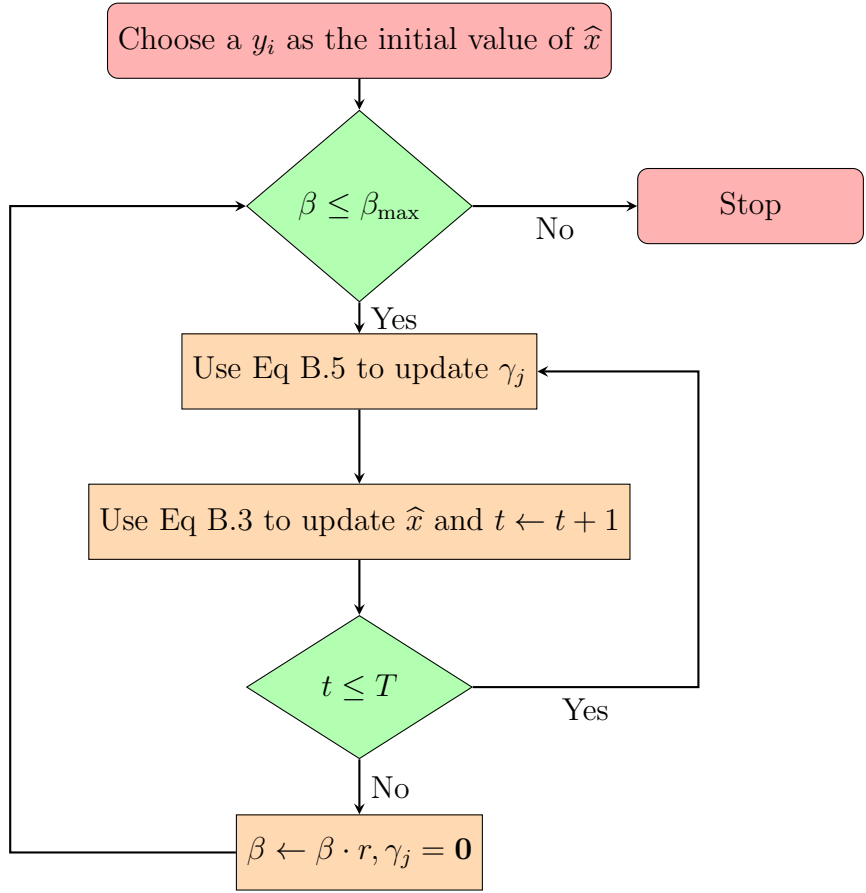
\begin{figure}[!h]
    \centering
    \begin{tikzpicture}[node distance=1.2cm][!h]
      \node (start) [startstop] {Choose a $y_i$ as the initial value of $\widehat{x}$};
      \node (decision1) [decision, below of=start, yshift=-1cm] {$\beta \leq \beta_{\text{max}}$};
      \node (process1) [process, below of=decision1, yshift=-1cm] {Use Eq \ref{eqB.5} to update $\gamma_j$};
      \node (process2) [process, below of=process1, yshift=-0.8cm] {Use Eq \ref{eqB.3} to update $\widehat{x}$ and $t \leftarrow t+1$};
      \node (decision2) [decision, below of=process2, yshift=-1cm] {$t \leq T$};
      \node (process3) [process, below of=decision2, yshift=-1cm] {$\beta \leftarrow \beta \cdot r, \gamma_j = \mathbf{0}$};
      \node (stop) [startstop, right of=decision1, xshift=4cm] {Stop};
            
      \draw [arrow] (start) -- (decision1);
      \draw [arrow] (decision1) -- node[anchor=west] {Yes} (process1);
      \draw [arrow] (decision1) -- node[anchor=north] {No} (stop);
      \draw [arrow] (process1) -- (process2);
      \draw [arrow] (process2) -- (decision2);
      \draw [arrow] (decision2) -- node[anchor=west] {No} (process3);
          
      \draw [arrow] (decision2.east) -| ([xshift=3cm] decision2.east) node[pos=0.25, below] {Yes} |- (process1.east);
      \draw [arrow] (process3.west) -| ([xshift=-3cm] process3.west) |- (decision1.west);
      \label{flowchat1}
    \end{tikzpicture}
    \captionsetup{width=0.9\textwidth}
    \caption{The flowchart of half-quadratic splitting framework.}
    \label{flowchart1}
  \end{figure}

\section{Appendix C: Converting the optimisation problems in  equation \ref{eqQ} and equation (\ref{eqQ}) with $\nabla_jy_i$ and $\nabla_j\widehat{x}$ replaced by $y_i$ and $\widehat{x}$ into the problems of Lasso regression}
\label{Appendix C}
 We first consider the following optimisation problem similar to that in  equation \ref{eqQ}
  \begin{equation}
	\mathop{\arg\min}\limits_{q_i} \bigg\{||y_i - q_i \ast \widehat{x}||_{\mathrm{F}}^2 + \mu ||\textbf{vec}(q_i)||_1\bigg\},
	\label{eqC.1}
  \end{equation}
  where the objective function is denoted as $L$. 

 Assuming all elements of $q_i$ are non-negative, the $L_1$-norm $||\textbf{vec}(q_i)||_1$ becomes differentiable. Let $\widetilde{q_i}$ denote the padded version of $q_i$, with padding defined in Eq \ref{eqA.3}. Given $\widetilde{q_i}(0,h,w) = \widetilde{q_i}(1,h,w) = \cdots = \widetilde{q_i}(C-1,h,w)$ for all $h,w$, we define $\mathring{q_i}$ as
  \begin{equation}
	\mathring{q_i}(h,w) = \widetilde{q_i}(c,h,w), \ \text{for} \ h=0,\cdots,a-1 \ \text{and} \ w =0,\cdots,b-1. 
	\label{eqC.2}
  \end{equation} 
  The first term in the objective function can then be expressed as:
  \begin{equation}
	\begin{split}
	  ||y_i - q_i \ast \widehat{x}||_{\mathrm{F}}^2 &= \frac{1}{ab} \sum\limits_{c=0}^{C-1} \sum\limits_{w=0}^{a-1} \sum\limits_{h=0}^{b-1} \big\{|\mathcal{F}(y_i)(c,w,h)|^2 + |\mathcal{F}(\mathring{q_i})(w,h)|^2 \cdotp |\mathcal{F}(\widehat{x})(c,w,h)|^2 \\
	  &- \overline{\mathcal{F}(y_i)(c,w,h)} \cdotp \mathcal{F}(\mathring{q_i})(w,h) \cdotp \mathcal{F}(\widehat{x})(c,w,h) \\
	  &- \mathcal{F}(y_i)(c,w,h) \cdotp \overline{\mathcal{F}(\mathring{q_i})(w,h)} \cdotp \overline{\mathcal{F}(\widehat{x})(c,w,h)} \big\}.
	\end{split}
	\label{eqC.3}
  \end{equation}
  Based on the 2-dimensional DFT $\mathcal{F}(\mathring{q_i})(w,h) := \sum\limits_{u=0}^{a-1} \sum\limits_{v=0}^{b-1} \mathring{q_i}(u,v) \exp \big[-2 \pi \mathbf{i} (\frac{uw}{a} + \frac{vh}{b}) \big]$ and the Euler's formula $\mathbf{e}^{\mathbf{i}x} = \cos x + \mathbf{i} \sin x$, the following derivatives are obtained:
  \begin{equation}
	\frac{\partial \mathcal{F}(\mathring{q_i})(w,h)}{\partial \mathring{q_i}(u,v)} = \cos [2\pi(\frac{uw}{a} + \frac{vh}{b})] - \mathbf{i} \sin [2\pi(\frac{uw}{a} + \frac{vh}{b})],
	\label{eqC.4}
  \end{equation}
  \begin{equation}
	\frac{\partial \overline{\mathcal{F}(\mathring{q_i})(w,h)}}{\partial \mathring{q_i}(u,v)} = \cos [2\pi(\frac{uw}{a} + \frac{vh}{b})] + \mathbf{i} \sin [2\pi(\frac{uw}{a} + \frac{vh}{b})],
	\label{eqC.5}
  \end{equation}
  \begin{equation}
	\begin{split}
	  \frac{\partial |\mathcal{F}(\mathring{q_i})(w,h)|^2}{\partial \mathring{q_i}(u,v)} &= \cos [2\pi(\frac{uw}{a} + \frac{vh}{b})] [\mathcal{F}(\mathring{q_i})(w,h) + \overline{\mathcal{F}(\mathring{q_i})(w,h)}] \\
	  &\ \ \ \ + \mathbf{i} \sin [2\pi(\frac{uw}{a} + \frac{vh}{b})] [\mathcal{F}(\mathring{q_i})(w,h) - \overline{\mathcal{F}(\mathring{q_i})(c,w,h)}] \\
	  &=2 \cos [2\pi(\frac{uw}{a} + \frac{vh}{b})] \textbf{Re}[\mathcal{F}(\mathring{q_i})(w,h)] - 2 \sin [2\pi(\frac{uw}{a} + \frac{vh}{b})] \textbf{Im}[\mathcal{F}(\mathring{q_i})(w,h)] \\
	  &=2 \cos [2\pi(\frac{uw}{a} + \frac{vh}{b})] \bigg\{\sum\limits_{\theta=0}^{a-1}\sum\limits_{\eta=0}^{b-1} \mathring{q_i}(\theta,\eta) \cos [2\pi(\frac{\theta w}{a} + \frac{\eta h}{b})] \bigg\} \\
	  &\ \ \ \ +2 \sin [2\pi(\frac{uw}{a} + \frac{vh}{b})] \bigg\{\sum\limits_{\theta=0}^{a-1}\sum\limits_{\eta=0}^{b-1} \mathring{q_i}(\theta,\eta) \sin [2\pi(\frac{\theta w}{a} + \frac{\eta h}{b})] \bigg\} \\
	  &= 2 \sum\limits_{\theta=0}^{a-1}\sum\limits_{\eta=0}^{b-1} \mathring{q_i}(\theta,\eta) \cos \big[2\pi(\frac{uw}{a} + \frac{vh}{b}) - 2\pi(\frac{\theta w}{a} + \frac{\eta h}{b}) \big].
	\end{split}
	\label{eqC.6}
  \end{equation}
  The first-order partial derivative of the objective function $L$ with respect to $\mathring{q_i}(u,v)$ is expressed as:
  \begin{eqnarray}
	  \frac{\partial L}{\partial \mathring{q_i}(u,v)} &=& \frac{2}{ab} \sum\limits_{c=0}^{C-1} \sum\limits_{w=0}^{a-1} \sum\limits_{h=0}^{b-1} \left\{|\mathcal{F}(\widehat{x})(c,w,h)|^2 \cdotp \sum\limits_{\theta=0}^{a-1}\sum\limits_{\eta=0}^{b-1} \mathring{q_i}(\theta,\eta) \cos \left[2\pi(\frac{uw}{a} + \frac{vh}{b})\right.\right.\nonumber\\
&&\left.\left. - 2\pi(\frac{\theta w}{a} + \frac{\eta h}{b}) \right] 
	   - \textbf{Re}\left[\overline{\mathcal{F}(y_i)(c,w,h)} \cdotp \mathcal{F}(\widehat{x})(c,w,h) \right] \cdotp \cos \left[2\pi(\frac{uw}{a} + \frac{vh}{b}) \right]\right.\nonumber \\
	  & & \left.- \textbf{Im}\left[\overline{\mathcal{F}(y_i)(c,w,h)} \cdotp \mathcal{F}(\widehat{x})(c,w,h) \big] \cdotp \sin \big[2\pi(\frac{uw}{a} + \frac{vh}{b}) \right] \right\} + \mu\nonumber \\
	  &=& \frac{2}{ab} \sum\limits_{\theta=0}^{a-1}\sum\limits_{\eta=0}^{b-1} \mathring{q_i}(\theta,\eta)\nonumber\\
             & &\cdotp \left\{\sum\limits_{c=0}^{C-1} \sum\limits_{w=0}^{a-1} \sum\limits_{h=0}^{b-1} |\mathcal{F}(\widehat{x})(c,w,h)|^2 \cdotp \cos \left[2\pi(\frac{uw}{a} + \frac{vh}{b}) - 2\pi(\frac{\theta w}{a} + \frac{\eta h}{b}) \right] \right\}\nonumber \\
	  & & -\frac{2}{ab} \sum\limits_{c=0}^{C-1} \sum\limits_{w=0}^{a-1} \sum\limits_{h=0}^{b-1} \bigg\{\textbf{Re}\left[\overline{\mathcal{F}(y_i)(c,w,h)} \cdotp \mathcal{F}(\widehat{x})(c,w,h) \right] \cdotp \cos \left[2\pi(\frac{uw}{a} + \frac{vh}{b}) \right]\nonumber \\ 
	  & & \left.+ \textbf{Im}\left[\overline{\mathcal{F}(y_i)(c,w,h)} \cdotp \mathcal{F}(\widehat{x})(c,w,h) \right] \cdotp \sin \left[2\pi(\frac{uw}{a} + \frac{vh}{b}) \right] \right\} + \mu. 
	\label{eqC.7}
  \end{eqnarray}
  The second-order partial derivative of $L$ with respect to $\mathring{q_i}(u,v)$ is:
  \begin{equation}
	\begin{split}
	  \frac{\partial^2 L}{\partial \mathring{q_i}(u,v)^2} &= \frac{2}{ab} \bigg\{\sum\limits_{c=0}^{C-1} \sum\limits_{w=0}^{a-1} \sum\limits_{h=0}^{b-1} |\mathcal{F}(\widehat{x})(c,w,h)|^2 \cdotp \cos \big[2\pi(\frac{uw}{a} + \frac{vh}{b}) - 2\pi(\frac{uw}{a} + \frac{vh}{b}) \big] \bigg\} \\
	  &= 2 ||\widehat{x}||_{\mathrm{F}}^2 > 0
	\end{split}
	\label{eqC.8}
  \end{equation}
  
  Finally, we define $\mathbf{A}: \{(g,m) \in \mathbb{N}^2: 0 \leq g \leq ab-1, 0 \leq m \leq ab-1 \} \rightarrow \mathbb{R}$ as:
  \begin{equation}
  	\mathbf{A}(g,m) = \frac{1}{ab} \sum\limits_{c=0}^{C-1} \sum\limits_{w=0}^{a-1} \sum\limits_{h=0}^{b-1} |\mathcal{F}(\widehat{x})(c,w,h)|^2 \cdotp \cos \big[2\pi(\frac{uw}{a} + \frac{vh}{b}) - 2\pi(\frac{\theta w}{a} + \frac{\eta h}{b}) \big],
  	\label{eqC.9}
  \end{equation}
  where $g = ub + v$ and $m = \theta b + \eta$.
  
  \begin{lemma}
  	The matrix $\mathbf{A}$ defined as equation (\ref{eqC.9}) is symmetric and positive-definite.
  \end{lemma}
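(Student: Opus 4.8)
The plan is to recognise $\mathbf{A}$ as (half of) the Hessian of the quadratic data-fidelity term $||y_i - q_i \ast \widehat{x}||_{\mathrm{F}}^2$ in the free kernel entries $\mathring{q_i}(u,v)$, and to exploit the angle-subtraction identity for the cosine to expose a sum-of-squares structure. Symmetry will fall out immediately from the evenness of the cosine; the real work lies in upgrading the obvious positive-\emph{semi}definiteness to strict positive-definiteness, for which I would pass to the frequency domain and invoke invertibility of the discrete Fourier transform.

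\textbf{Symmetry.} Writing $g=ub+v$ and $m=\theta b+\eta$ and setting $\phi_g(w,h):=2\pi\big(\tfrac{uw}{a}+\tfrac{vh}{b}\big)$, the summand in (\ref{eqC.9}) is $|\mathcal{F}(\widehat{x})(c,w,h)|^2\cos[\phi_g(w,h)-\phi_m(w,h)]$. Exchanging $g\leftrightarrow m$ negates the cosine's argument, and since $\cos$ is even the summand is unchanged; hence $\mathbf{A}(g,m)=\mathbf{A}(m,g)$ for every pair of indices.

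\textbf{Positive semidefiniteness.} First I would apply $\cos(\alpha-\beta)=\cos\alpha\cos\beta+\sin\alpha\sin\beta$ with $\alpha=\phi_g$, $\beta=\phi_m$, equivalently $\cos[\phi_g-\phi_m]=\textbf{Re}\,[e^{\mathbf{i}\phi_g}\,\overline{e^{\mathbf{i}\phi_m}}]$. For an arbitrary real vector $\mathbf{z}=(z_g)_{g=0}^{ab-1}$ the quadratic form then collapses to
\begin{equation*}
\mathbf{z}^{\mathbf{T}}\mathbf{A}\,\mathbf{z}=\frac{1}{ab}\sum_{c=0}^{C-1}\sum_{w=0}^{a-1}\sum_{h=0}^{b-1}|\mathcal{F}(\widehat{x})(c,w,h)|^2\,\Big|\sum_{g=0}^{ab-1}z_g\,e^{\mathbf{i}\phi_g(w,h)}\Big|^2,
\end{equation*}
a nonnegative combination of squared moduli, hence $\geq 0$. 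This already yields the convexity (positive-semidefinite) part.

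\textbf{Strict positivity.} The inner factor $V(w,h):=\sum_g z_g e^{\mathbf{i}\phi_g(w,h)}$ is, after reshaping $z_g$ into the array $Z(u,v):=z_{ub+v}$, exactly the conjugate two-dimensional DFT of $Z$ on the $a\times b$ grid, so $|V(w,h)|=|\mathcal{F}(Z)(w,h)|$. Since the DFT is an invertible linear map, $\mathbf{z}\neq\mathbf{0}$ forces $V\not\equiv 0$. It then remains to ensure the spectral weights do not annihilate $V$: under the standing nondegeneracy hypothesis that $\widehat{x}$ has nonvanishing discrete spectrum, i.e. $\sum_{c}|\mathcal{F}(\widehat{x})(c,w,h)|^2>0$ at every frequency $(w,h)$, each term with $V(w,h)\neq0$ contributes strictly, giving $\mathbf{z}^{\mathbf{T}}\mathbf{A}\,\mathbf{z}>0$. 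The hard part will be precisely this last step: semidefiniteness is automatic, but strict positive-definiteness genuinely needs the image spectrum to be nonzero on the whole grid (equivalently, the convolution operator $q\mapsto q\ast\widehat{x}$ to have trivial kernel). I would either carry this as an explicit assumption on $\widehat{x}$---satisfied generically and in particular by natural images whose spectrum does not vanish on a full frequency set---or state the conclusion as positive-semidefiniteness together with this nondegeneracy condition. This is consistent with the diagonal computation (\ref{eqC.8}), where $\partial^2 L/\partial\mathring{q_i}(u,v)^2=2||\widehat{x}||_{\mathrm{F}}^2>0$ already excludes the trivial image $\widehat{x}=0$.
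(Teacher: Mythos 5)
Your argument follows the same route as the paper's own proof: symmetry from the evenness of the cosine, and the quadratic form collapsed via the angle-subtraction identity into a spectrally weighted sum of squared moduli. The difference lies in the last step, and there your version is the more careful one. The paper asserts, in its equation (\ref{eqC.11}), that $\bigl[\sum_g r_g \cos\Phi_{w,h}(g)\bigr]^2 + \bigl[\sum_g r_g \sin\Phi_{w,h}(g)\bigr]^2 > 0$ holds at each frequency $(w,h)$, which is false: at $(w,h)=(0,0)$ every phase $\Phi_{0,0}(g)$ vanishes, so that term equals $\bigl(\sum_g r_g\bigr)^2$, which is zero for any nonzero vector $\mathbf{r}$ with zero sum. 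What is true is only what you prove: the DFT of a nonzero real vector cannot vanish at \emph{all} frequencies, so the collection $\{V(w,h)\}$ is not identically zero. Strict definiteness of $\mathbf{A}$ therefore genuinely requires your nondegeneracy hypothesis that $\sum_{c}|\mathcal{F}(\widehat{x})(c,w,h)|^2>0$ on the whole grid; without it the conclusion can fail outright---if $\widehat{x}$ were a constant image its spectrum would be concentrated at $(0,0)$, and any zero-sum $\mathbf{r}$ would give $\mathbf{r}^{\mathbf{T}}\mathbf{A}\mathbf{r}=0$, so $\mathbf{A}$ would be merely positive semidefinite. Your proposal to carry the spectral assumption explicitly (or to weaken the stated conclusion to semidefiniteness plus that condition) is the correct repair of a gap that the paper's proof, as written, does not address.
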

  \begin{proof}
  	\textbf{Symmetry} Using the identity $\cos \big[2\pi(\frac{uw}{a} + \frac{vh}{b}) - 2\pi(\frac{\theta w}{a} + \frac{\eta h}{b}) \big] = \cos \big[2\pi \frac{w}{a}(u-\theta) + 2\pi \frac{h}{b}(v-\eta) \big] = \cos \big[2\pi \frac{w}{a}(\theta - u) + 2\pi \frac{h}{b}(\eta - v) \big]$, we have $\mathbf{A}(g,m) = \mathbf{A}(m,g)$. Thus, $\mathbf{A}$ is symmetric. \\
  	
  	\textbf{Positive definiteness} 
  	For any non-zero vector $\mathbf{r} = (r_0,\cdots,r_{ab-1}) \in \mathbb{R}^{ab}$, consider the quadratic form $\mathbf{r}^{\textbf{T}} \mathbf{A} \mathbf{r}$:
  	\begin{equation}
  	  \begin{split}
  	  	\mathbf{r}^{\textbf{T}} \mathbf{A} \mathbf{r} &= \sum\limits_{g=0}^{ab-1} \sum\limits_{m=0}^{ab-1}  r_g r_m \mathbf{A}(g,m) \\
  	  	&=\frac{1}{ab} \sum\limits_{g=0}^{ab-1} \sum\limits_{m=0}^{ab-1} \sum\limits_{c=0}^{C-1} \sum\limits_{w=0}^{a-1} \sum\limits_{h=0}^{b-1} r_g r_m |\mathcal{F}(\widehat{x})(c,w,h)|^2 \cdotp \cos \big[2\pi(\frac{uw}{a} + \frac{vh}{b}) - 2\pi(\frac{\theta w}{a} + \frac{\eta h}{b}) \big] \\
  	  	&=\frac{1}{ab} \sum\limits_{c=0}^{C-1} \sum\limits_{w=0}^{a-1} \sum\limits_{h=0}^{b-1} |\mathcal{F}(\widehat{x})(c,w,h)|^2 \bigg\{\sum\limits_{g=0}^{ab-1} \sum\limits_{m=0}^{ab-1} r_g r_m \cos \big[2\pi(\frac{uw}{a} + \frac{vh}{b}) - 2\pi(\frac{\theta w}{a} + \frac{\eta h}{b})\bigg\}. 
  	  \end{split}
  	  \label{eqC.10}
  	\end{equation}
  	We denote $2\pi(\frac{uw}{a} + \frac{vh}{b})$ as $\Phi_{w,h}(g)$. Then,
  	\begin{equation}
  	  \begin{split}
  	  	&\sum\limits_{g=0}^{ab-1} \sum\limits_{m=0}^{ab-1} r_g r_m \cos \big[2\pi(\frac{uw}{a} + \frac{vh}{b}) - 2\pi(\frac{\theta w}{a} + \frac{\eta h}{b}) \big] = \sum\limits_{g=0}^{ab-1} \sum\limits_{m=0}^{ab-1} r_g r_m \cos \big[\Phi_{w,h}(g) - \Phi_{w,h}(m) \big] \\
  	  	&= \sum\limits_{g=0}^{ab-1} \sum\limits_{m=0}^{ab-1} r_g r_m \cos\Phi_{w,h}(g) \cos\Phi_{w,h}(m) + \sum\limits_{g=0}^{ab-1} \sum\limits_{m=0}^{ab-1} r_g r_m \sin\Phi_{w,h}(g) \sin\Phi_{w,h}(m) \\
  	  	&= \bigg[\sum\limits_{g=0}^{ab-1} r_g \cos\Phi_{w,h}(g) \bigg]^2 + \bigg[\sum\limits_{g=0}^{ab-1} r_g \sin\Phi_{w,h}(g) \bigg]^2 > 0
  	  \end{split}
  	  \label{eqC.11}
  	\end{equation}
  	Thus, the quadratic form $\mathbf{r}^{\textbf{T}} \mathbf{A} \mathbf{r}$ is strictly positive and the matrix $\mathbf{A}$ is positive-definite.
  \end{proof}
  
  Using the Cholesky decomposition, the symmetric and positive-definite matrix $\mathbf{A}$ can be expressed as $\mathbf{B}^{\textbf{T}} \mathbf{B}$ where $\mathbf{B}$ is an upper-triangular matrix with positive diagonal elements. The vectorisation of $\mathring{q_i}$ is denoted as $\textbf{vec}(\mathring{q_i}) = \big[\mathring{q_i}(0,0),\mathring{q_i}(0,1),\cdots,\mathring{q_i}(0,b-1),\mathring{q_i}(1,0),\mathring{q_i}(1,1),\cdots,\mathring{q_i}(1,b-1),\cdots,\mathring{q_i}(a-1,0),\mathring{q_i}(a-1,1),\cdots,\mathring{q_i}(a-1,b-1)\big]^{\textbf{T}}$. Based on this representation, Eq \ref{eqC.7} can be reformulated as follows:
  \begin{equation}
  	\frac{\partial L}{\partial \textbf{vec}(\mathring{q_i})} = 2 [\mathbf{B}^{\textbf{T}} \mathbf{B} \cdotp \textbf{vec}(\mathring{q_i}) - \mathbf{B}^{\textbf{T}} \mathbf{z}] + \mu \mathbf{1},
  	\label{eqC.12}
  \end{equation}
  where $\mathbf{B}^{\textbf{T}} \mathbf{z} = \mathbf{z}'$, 
  \begin{equation}
  	\begin{split}
  	  \mathbf{z}'(j) &= \frac{1}{ab} \sum\limits_{c=0}^{C-1} \sum\limits_{w=0}^{a-1} \sum\limits_{h=0}^{b-1} \bigg\{\textbf{Re}\big[\overline{\mathcal{F}(y_i)(c,w,h)} \cdotp \mathcal{F}(\widehat{x})(c,w,h) \big] \cdotp \cos \big[2\pi(\frac{uw}{a} + \frac{vh}{b}) \big] \\
  	  &\ \ \ \ \ + \textbf{Im}\big[\overline{\mathcal{F}(y_i)(c,w,h)} \cdotp \mathcal{F}(\widehat{x})(c,w,h) \big] \cdotp \sin \big[2\pi(\frac{uw}{a} + \frac{vh}{b}) \big] \bigg\}, \ \text{for} \ j=0,\cdots,ab-1,
  	\end{split}
  	\notag
  \end{equation}
  and $\mathbf{1}=[1,\cdots,1]^{\textbf{T}}$. Thus, the optimisation problem equation (\ref{eqC.1}) can be rewritten as: 
  \begin{equation}
  	\mathop{\arg\min}\limits_{\textbf{vec}(\mathring{q_i})} \bigg\{ ||\mathbf{z} - \mathbf{B} \cdotp \textbf{vec}(\mathring{q_i})||_2^2 + \mu ||\textbf{vec}(\mathring{q_i})||_1 \bigg\}. 
  	\label{eqC.13}
  \end{equation}
  Although each $\mathring{q_i}$ contains $(ab)^2$ elements, based on the transformations shown as equation (\ref{eqA.3}) and equation (\ref{eqC.2}), most elements are known to be zero. When solving the lasso problem equation (\ref{eqC.13}), only the non-zero elements of $\textbf{vec}(\mathring{q_i})$ and the corresponding rows and columns of $\mathbf{z}$ and $\mathbf{B}$ are considered. This reduces the number of unknowns in $\textbf{vec}(\mathring{q_i})$ from $(ab)^2$ to $s^2$. Similarly, the optimisation problem equation (\ref{eqQ}) can also be reformulated in the form of equation (\ref{eqC.13}).

Analogously, we converte the optimisation problem in equation (\ref{eqQ}) to a problem of Lasso regression.

%\bibliographystyle{Chicago}
%\bibliography{Bibliography}

\end{document}